\newtheorem{Theorem}{Theorem}
\newtheorem{Lemma}{Lemma}
\newtheorem{Remark}{Remark}
\newtheorem{Corollary}{Corollary}
\newtheorem{Definition}{Definition}
\theoremstyle{definition}
\def\BibTeX{{\rm B\kern-.05em{\sc i\kern-.025em b}\kern-.08em
    T\kern-.1667em\lower.7ex\hbox{E}\kern-.125emX}}
\begin{document}

\title{Coded Computing for Master-Aided Distributed Computing Systems}

\author{
  \IEEEauthorblockN{Haoning Chen and Youlong Wu\\}  %\IEEEauthorblockA{School of Information Science and Technology\\
%    ShanghaiTech University, China\\
%     \{chenhn,wuyl1\}@shanghaitech.edu.cn }
}

\maketitle

\begin{abstract}

 We consider a  MapReduce-type task running in a distributed computing model which consists of ${K}$ edge computing nodes distributed across the edge of the network and a Master node that assists the edge nodes to compute output functions. The Master node and the edge nodes, both equipped with some storage memories and computing capabilities, are connected through a multicast network. We define the communication time spent during the transmission for the sequential implementation (all nodes send symbols sequentially) and parallel implementation (the Master node can send symbols during the edge nodes' transmission), respectively.  We propose a  mixed coded distributed computing scheme that divides the system into two subsystems where the coded distributed computing  (CDC) strategy proposed by Songze Li \emph{et al.} is applied into the first subsystem and a novel master-aided CDC strategy  is applied into the second subsystem.  We prove that  this  scheme is optimal, i.e.,  achieves the minimum communication time for both the sequential and parallel implementation, and establish an {\emph{optimal}} information-theoretic tradeoff between the overall communication time, computation load, and the Master node's storage capacity.   It demonstrates that incorporating a Master node with storage and computing capabilities can further reduce the communication time. For the sequential implementation, we deduce the approximately optimal file allocation between the two subsystems, which shows that  the Master node should  map as many files as possible in order to achieve smaller communication time. For the  parallel implementation, if the Master node's storage and computing capabilities are sufficiently large (not necessary to store and map all files), then the proposed  scheme requires at most 1/2 of the minimum communication time of system without the help of the Master node.
 
 %Finally, considering that the assignment of input files among the edge nodes and the Master node will affect the overall communication time, for both the sequential implementation and the parallel implementation of the edge nodes and the Master node, we respectively deduce the asymptotically optimal file allocation among the Master node and the edge nodes. 

          %  The emerging applications of mobile edge computing and fog computing have incorporated a huge number of computing resources  scattered across the edge of the network to perform data-driven computing collaboratively. A key problem is how to make full use of these computing resources to reduce traffic loads and alleviate network congestion.

\end{abstract}

\begin{IEEEkeywords}
%Cache, cooperation, delay 
Distributed computing, MapReduce, communication, computation 
\end{IEEEkeywords}

\section{Introduction}

Mobile edge computing is a distributed network architecture that offloads the computing tasks from the center cloud server to the distributed edge nodes. We focus on a prevalent distributed computing structure called MapReduce \cite{MapReduce}, where the data shuffling time (communication time) often becomes the bottleneck of many MapReduce applications. 

Some previous works have put forward some distributed computing models and used coding strategies to reduce the communication time. In \cite{CDC} Songze Li \emph{et al.}  proposed a scheme called coded distributed computation (CDC) based on the MapReduce framework, and established a fundamental tradeoff between the computation load  and the communication load. In \cite{Optimality}, it studied  resource allocation problem such that the  execution time of the MapReduce-type task is minimized. An asymmetric coded distributed coding (ACDC) was proposed where the   servers are divided into solvers and  helpers.  The helpers  map all the  input files and multicast the message symbols, and the solvers map and decode the message symbols and then generate Reduce functions based on the self-generated and decoded messages. 
Based on a Erd\"os-R\'enyi random graph, a computing network was proposed that achieves an asymptotically inverse-linear tradeoff between the computation load and the average communication load \cite{Graph}. A  similar model allowing every two nodes to exchange data with a probability $p$ was investigated in  \cite{randon_connectivity}. Multiple computations were considered in \cite{Compressed}, where  combined packets  are generated by combining several intermediate results for the single computation and   are  multicast to the network. In \cite{Straggler_0, Straggler_1,Straggler_2,Straggler_3,Straggler_4,Straggler_5,Straggler_6}, the authors focused on the  distributed computing systems with straggling servers.

Note that in many  edge computation systems, such as the federate learning framework \cite{Federated_1,Federated_2,Federated_3,Bonawitz'FL}, the edge nodes are connected with a Master node which is equipped with some storage memories and computing capability and can  help the edges nodes compute  output functions.  This master-aided distributed computing system has been studied in \cite{CCEP,Scalable}. \cite{CCEP} proposed a coded computing architecture where the edge nodes upload the computation tasks to the central node, which is responsible for implementing the coded computations and sending the computation results back to the edge nodes. The proposed scheme can simultaneously minimize the amount of calculation and communication. \cite{Scalable} considered a MapReduce framework for wireless communication scenarios, in which multiple edge nodes are connected wirelessly by an access point. The framework performs redundant calculations on the side of edge nodes  and utilizes coding techniques to reduce the communication time, thereby achieving a scalable design. However, \cite{CCEP} only considered the computing capabilities of the central node, while the edge nodes cannot store any file or do any computation. The central node in \cite{Scalable} is not responsible for any computing task, it only receives information symbols uploaded from the edge nodes and makes linear combinations of them, then sends the linear combinations back to the edge nodes.

%, the number of output functions is no smaller than the number of users, i.e., $Q \geq K$.    

In this paper, we consider a master-aided distributed computing systems where   both the Master node and $K$ edge nodes  have a certain storage  and computing capabilities.   The MapReduce-type task has $N$ input files and $Q$ output functions running in this master-aided distributed computing framework. We  define  communication time for the   sequential implementation where all nodes send symbols sequentially and the parallel implementation where the Master node can send symbols during the edge nodes' transmission, respectively.  We then propose a mixed coded computing scheme that divides the system into two subsystems, where in the first subsystem, the CDC scheme is applied to send the required intermediate values generated based on files stored at edge nodes but not at the Master node, and in the second subsystem, a generalized ACDC, extending the original ACDC scheme from case $s=1$ to $s\geq 1$, is applied to send the required intermediate values generated based on the files stored at the Master node.  This scheme is proved to be optimal, i.e.,  achieve the minimum communication time spent during the shuffle phase. An {\emph{optimal}} information-theoretic tradeoff between the overall communication time, computation load, and the Master's nodes storage capacity is established. It demonstrates that incorporating a Master node with computing capabilities can further reduce the communication time. Furthermore, we show that for the parallel implementation, our scheme  improves the communication time of CDC by at least a factor  $\max\{\frac{1}{2}, 1-\frac{M_0}{N}\}$ for the case $s = 1$. 

%In our proposed computing model, the Master node is similar to the helpers in \cite{Optimality}, which are  not responsible for any Reduce computations, and help the edge nodes generate output functions. However, the Master node does not necessarily map all the $N$ input files, since its cache size can be limited. Moreover, the ACDC scheme is not optimal when either the total number of computing nodes or the system's total storage size is small. 

\section{Problem Formulation}\label{Sec_Model}
We consider a MapReduce-type task running in a  distributed computing system, in which there are a Master node, $K$ edge nodes, $N$ input files $w_1, \ldots, w_N\in\mathbb{F}_{2^F}$, $Q$ output functions $u_q=\phi_q(w_1, \ldots, w_N)\in\mathbb{F}_{2^B}, q\in\{1,\ldots,Q\}$, for some integers $K, N,F,Q,B$ and $1 \leq K \leq N$. The $K$ edge nodes and the Master node are connected through a noiseless broadcast network. The $K$ edge nodes are responsible for computing the $Q$ output functions.  In this paper, we consider the general case where each reduce function is computed by $s$ edge nodes for  $s\in\{1,\ldots,K\}$ so that it can support general computation model with multiple rounds of map and reduce. The Master node, equipped with cache memories that are able to store $M_0\in \mathbb{N}$ files, can assist the edge nodes to generate the output functions. We consider the nontrivial case $M_0\leq N$, as the condition $M_0= N$ is sufficient for the Master node to cache and map all input files. 

The whole computation process is divided into three phases:  Map ,  Shuffle  and  Reduce. 
\subsubsection{Map phase} For $n=\{1,\ldots,N\}$, the Map functions $\overrightarrow{g_n}=(g_{1,n}, \dots, g_{Q,n})$ map the input file $w_n$ into $Q$ length-$T$ intermediate values $v_{q,n}= g_{q,n}(w_n)\in\mathbb{F}_{2^T}$, for some $T\in \mathbb{N}$. We denote the set of files that are mapped at edge node $k$ as $\mathcal{M}_k$, for some  $k\in \{1, \ldots, K\}$ and $\mathcal{M}_k\subseteq \{w_1, \ldots, w_N\}$. For each file $w_n\in\mathcal{M}_k$, edge node $k$ computes  $g_n(w_n)=(v_{1,n}, \ldots, v_{Q,n})$. Since the Master node has a cache size of $M_0$ files, it can map at most  $M_0$ of the $N$ input files. We denote the set of files that are mapped at the Master node as $\mathcal{M}_0$, for some $\mathcal{M}_0 \subseteq \{w_1, \ldots, w_N\}$. For each file $w_n$ in $\mathcal{M}_0$, the Master node computes  $g_n(w_n)=(v_{1,n}, \ldots, v_{Q,n})$.

\begin{Definition} \label{r_def}
(Computation Load): {The computation load, denoted by $r$, for some $0 \leq r \leq K$, is defined as the ratio between the sum of the number of files mapped by the $K$ edge nodes and the total number of files $N$,} i.e., $r\triangleq \frac{\begin{matrix} \sum_{k=1}^K \left|\mathcal{M}_k \right| \end{matrix}}{N}$. %{\color{magenta} The computation load $r$ can be interpreted as the average number of edge nodes that map each file. }

\end{Definition}
{
\begin{Definition} \label{M1_def}
(Average Map Cost): The average map cost, denoted by $M_1$, is defined as the total number of  mapped files across  the $K$ edge nodes, normalized by $K$, i.e., $M_1\triangleq \frac{\begin{matrix} \sum_{k=1}^K \left|\mathcal{M}_k \right| \end{matrix}}{K}$. The average map cost  $M_1$ can be interpreted as the average number of files that mapped by each edge node.

\end{Definition}

\begin{Remark}
From Definition \ref{r_def} and \ref{M1_def}, we have $M_1\leq N$ and $r = \frac{KM_1}{N}$.  To guarantee that  the whole distributed computation system is able to map all $N$ input files, we must have  $KM_1+M_0 \geq N$.  
\end{Remark}
}

\subsubsection{Shuffle phase}
{
In the Shuffle phase, each edge node $k$, $k\in \{1, \ldots, K\}$, generates a message symbol $X_k = \psi_{k}(\overrightarrow{g_n}:w_n\in\mathcal{M}_k)$, $X_k\in\mathbb{F}_{2^{l_k}}$, for some $l_k \in \mathbb{N}$,   based on its   local intermediate values generated in the Map phase and then
%builds a transmitting symbol $X_k$ with length $l_k$ , for some $l_k \in \mathbb{N}$, with all of its locally known intermediate values after the Map phase, 
%i.e., for some encoding function $\psi_k$ at edge node $k$, we have
%$$X_k=\psi_k(\overrightarrow{g_n}:w_n\in \mathcal{M}_k),$$ 
sends it to the other edge nodes to meet their demands. Also, the Master node builds a message symbol $X_{0}\in\mathbb{F}_{2^{l_0}}$, for some $l_{0}\in \mathbb{N}$,   based on its  local intermediate values, 
and then  multicasts it to all the edge nodes.} {By the end of the Shuffle phase, every edge node successfully receives $X_1, \ldots, X_K$ and $X_0$ free of error. 

\begin{Definition}\label{DefLoad}
(Communication Load): 
The communication load is  define  as
\begin{IEEEeqnarray}{rCl}
L&\triangleq& \frac{l_1+ \dots +l_K+l_0}{QNT}.
\end{IEEEeqnarray}
Here $L$ represents the (normalized) total number of bits communicated by the $K$ edge nodes and the Master node during the Shuffle phase.  
\end{Definition} 

In this work, we consider two kinds of implementations: the sequential implementation and the parallel implementation. For the sequential implementation, the Master node and  the $K$ edge nodes send messages sequentially, while for the parallel implementation, $K$ edge nodes and the Master node are able to transmit message symbols simultaneously. 
}

\begin{Definition}\label{DefTime}
(Communication Time): 
Let $L_\textnormal{S}$ and $L_\textnormal{P}$ denote the communication time for the  sequential and parallel implementations, respectively.  We define  
\begin{IEEEeqnarray}{rCl}
L_\textnormal{S}&\triangleq& L= \frac{l_1+ \dots +l_K+l_0}{QNT},\\
L_\textnormal{P}&\triangleq &\max\left\{\frac{l_1+ \dots +l_K}{QNT}, \frac{l_0}{QNT}\right\}.
\end{IEEEeqnarray}
Here $L_\textnormal{P}$ represents the maximum between the (normalized) number of bits sent by the $K$ edge nodes and (normalized) number of bits multicast by the Master node during the Shuffle phase. 

\end{Definition}

%Each edge node has a cache size of $M_1$, and the Master node has a cache size of $M_0$, for some $M_1, M_0\in \mathbb{N}$, $1 \leq M_1 \leq K$ and $KM_1+M_0 \geq N$. 

\subsubsection{Reduce phase}
We assume that $\frac{Q}{{K\choose s}}\in \mathbb{N}$, and assign reduce tasks symmetrically to $K$ edge nodes.  More precisely, edge node $k$, $k \in \{1, \ldots, K\}$, wishes to compute 
$u_q$ if $q\in \mathcal{W}_k$, where $\mathcal{W}_k\subseteq\{1,\ldots,Q\}$ is a set including the indices of output function assigned to edge node $k$, satisfying: 1) $\left|\mathcal{W}_1 \right|= \dots = \left|\mathcal{W}_K \right|=\frac{Q}{{K\choose s}}$ , 2) $\big(\cap_{j\in \mathcal{S}_1  }\mathcal{W}_j\big) \cap  (\cap_{k\in \mathcal{S}_2  }\mathcal{W}_k\big)=\varnothing$ for all $\mathcal{S}_1,\mathcal{S}_2\subseteq \{1,\ldots,K\}$ if $\mathcal{S}_1\neq \mathcal{S}_2$.

In the Reduce phase, edge node $k\in \{1, \ldots, K\}$ decodes its needed intermediate values based on its local intermediate values and  the messages $X_1, \ldots, X_K$ and $X_0$ received in the Shuffle phase, i.e.,
$(v_{q,1},\ldots,v_{q,N})=d_k^q (X_0,X_1,\ldots,X_k, \{\overrightarrow{g_n}:w_n\in\mathcal{M}_k)\}),$
where $d_k^q$ is a decoding function for edge node $k$ and Reduce function $q\in\mathcal{W}_k$.  
  Edge node $k$ then uses the intermediate values $(v_{q,1},\ldots,v_{q,N})$ to compute the Reduce functions $u_q$.%,
%. i.e., for each $q\in \mathcal{W}_k$ and some decoding function $\chi_k^q$, edge node $k$ computes $$(v_{q,1}, \dots, v_{q,n})=\chi_k^q(X_1, \dots, X_K, X_0,\{\overrightarrow{g_n}:w_n\in \mathcal{M}_k\}).$$
%
%Finally, User $k$, $k\in \{ 1, \dots, K\}$, computes the Reduce function $u_q =h_q(v_{q,1}, \dots, v_{q,N})$ for all $q\in \mathcal{W}_k$, where $u_q$ denotes 
%and finally obtain the final output value of the overall computing process. 

A  computation-communication tuple $(r,s,L_\textnormal{S})$ for the sequential implementation is feasible if for any positive $\epsilon$, and sufficiently large $N$, there exist $\{\mathcal{M}_k\}_{k=0}^K$, $\{\mathcal{W}_k\}_{k=1}^K$, a set of  encoding functions $\{\psi_{k}\}_{k=1}^K$, and a set of decoding function $\{d^q_k,k\in\mathcal{W}_k\}_{k=1}^K$ to  achieve the computation-communication tuple $(\tilde{r},\tilde{s},\tilde{L}_\textnormal{S})$ such that $|r-\tilde{r}|\leq \epsilon$,  $|s-\tilde{s}|\leq \epsilon$, $|L_\textnormal{S}-\tilde{L}_\textnormal{S}|\leq \epsilon$, and Node $k\in\{1,\ldots,K\}$ can successfully compute all the output functions $(u_q:  q\in \mathcal{W}_k)$.

%Now we introduce the following definitions similar  as \cite{CDC}.
\begin{Definition}
The computation-communication function for the distributed computing framework with $s\geq1$ is define as
\begin{IEEEeqnarray}{rCl}
L^*_\textnormal{S}(r,s,M_0) \triangleq \textnormal{inf} ~\{L_\textnormal{S}: (r,s,M_0,L_\textnormal{S})~\textnormal{is feasible} \}.
\end{IEEEeqnarray}
\end{Definition}
We simply use $L^*_\textnormal{S}(r,M_0)$ to denote $L^*_\textnormal{S}(r,s,M_0)$ when $s=1$.  For the  parallel implementation, we define the computation-communication function $L^*_\textnormal{P}(r,s,M_0) $ similar as before, but with the subscript $\textnormal{S}$ replaced by $\textnormal{P}$.

\section{Main Results}\label{Sec_Results}
Let 
\begin{subequations} \label{EqL1} 
\begin{IEEEeqnarray}{rCl}
L_1(r_1,s)  &\triangleq & \mathop{\sum}\limits_{\ell = \max\{r_1+1,s\}}^{\min\{r_1+s,K\}} \frac{\binom{K}{\ell} \binom{\ell-1}{r_1} \binom{r_1}{\ell-s}}{ \binom{K}{r_1} \binom{K}{s}}  \frac{\ell}{\ell - 1}, \\
L_2(r_2, s)  &\triangleq & \mathop{\sum}\limits_{\ell = \max\{r_2+1, s\}}^{\min\{r_2+s, K\}} \frac{\binom{K}{\ell} \binom{\ell - 1}{r_2} \binom{r_2}{\ell - s}}{\binom{K}{r_2} \binom{K}{s}},
\end{IEEEeqnarray}
\end{subequations}
for any $r_1,r_2 \in \{0, \ldots, K\}$. By letting $s=1$ in \eqref{EqL1}, we have $L_1(r_1,s)=L_1(r_1)$ and $L_1(r_1,s)=L_1(r_1)$ where
\begin{subequations}
\begin{IEEEeqnarray}{rCl}
L_1(r_1)  &\triangleq &  \frac{1}{r_1}  \left(1-\frac{r_1}{K}\right),\\
L_2(r_2)  &\triangleq &  \frac{1}{r_2+1}  \left(1-\frac{r_2}{K}\right).
\end{IEEEeqnarray}
\end{subequations}
 %and $r_2\in  \{0, \ldots, K\}$. 

%We can learn from \cite{CDC} that for a distributed MapReduce computing framework consisting of $K$ edge nodes, $N$ input files and $Q$ reduce functions, the minimum communication time is $L=\frac{1}{r}\cdot (1-\frac{r}{K})$, where $r$ denotes the average number of edge nodes that map each file. Therefore, the minimum %communication time of Subsystem 1 is $ L_1^*(r_1)\triangleq \frac{1}{r_1}\cdot (1-\frac{r_1}{K}), \quad r_1\in \{0, \ldots, K\} $. 

\begin{Theorem} \label{LS(rs)}

The computation-communication function of the overall distributed computing system for the sequential implementation with $s\geq 1$, $L_\textnormal{S}^*(r, s,M_0)$, is given by  
\begin{IEEEeqnarray}{rCl}\label{EqSeq}
L_\textnormal{S}^*(r, s,M_0)%& = L_\textnormal{S}^*(r_1,r_2) \nonumber
=\!\!\!\!\!\!\!\!\! \min_{\substack{(\alpha, r_1,r_2):\\
 \alpha  r_1 + (1 - \alpha )  r_2=r,\\r_1,r_2\in\{0,\ldots,K\},\\
 \alpha \in [1-\frac{M_0}{N},1]
 % \alpha \in \frac{\{N-M_0,\ldots, N\}}{N}
  } }  \!\!\!\!\!\!\!\!\!
  \alpha   L_1^*(r_1, s) + (1 - \alpha)  L_2^*(r_2, s),~
%\\&= \alpha \cdot \frac{1}{r_1} \cdot (1-\frac{r_1}{K})+(1-\alpha) \cdot \frac{1}{r_2+1} \cdot (1-\frac{r_2}{K})
\end{IEEEeqnarray}
for some $r\in[0,K]$, 
 where  $L_1^*(r_1, s)$ is the lower convex envelope of the points $\{(r_1, L_1(r_1,s)): r_1\in \{0, \dots, K\}\}$, and  $L_2^*(r_2, s)$ is the lower convex envelope of the points $\{(r_2, L_2(r_2,s)): r_2\in \{0, \dots, K\}\}$.
\end{Theorem}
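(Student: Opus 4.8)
The plan is to prove Theorem~\ref{LS(rs)} by establishing achievability and a matching converse, both organized around a partition of the $N$ input files into the group $\mathcal{B}$ of files mapped by the Master node and the group $\mathcal{A}$ of files not mapped by the Master node. The two building blocks are Li \emph{et al.}'s CDC scheme together with its optimality result from \cite{CDC}, which will be used for the files in $\mathcal{A}$, and the generalized ACDC (helper--solver) scheme together with its own achievability and converse results, which will be used for the files in $\mathcal{B}$; combining them is mostly bookkeeping of the file fractions and of the computation loads on each group.

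\emph{Achievability.} Fix any tuple $(\alpha,r_1,r_2)$ feasible for the minimization in \eqref{EqSeq}. Because $\alpha\ge 1-\tfrac{M_0}{N}$, the group $\mathcal{B}$ can be taken to consist of $(1-\alpha)N\le M_0$ files, so the Master node can cache and map all of them; let $\mathcal{A}$ be the remaining $\alpha N$ files. On the files in $\mathcal{A}$ I would run CDC among the $K$ edge nodes with edge computation load $r_1$, and on the files in $\mathcal{B}$ I would run the generalized ACDC scheme with the Master node acting as the single helper that has mapped all of $\mathcal{B}$ and the $K$ edge nodes acting as solvers whose aggregate mapping of $\mathcal{B}$ has multiplicity $r_2$ per file; in both subsystems the $Q$ Reduce functions are delivered using the same symmetric $s$-fold assignment $\{\mathcal{W}_k\}$. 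Memory-sharing over integer computation loads inside each subsystem realizes any point on the relevant lower convex envelope, so the numbers of transmitted bits are $\alpha N Q T\, L_1^*(r_1,s)$ and $(1-\alpha) N Q T\, L_2^*(r_2,s)$. The edge computation load is $\tfrac1N\sum_{k=1}^K|\mathcal{M}_k| = \alpha r_1+(1-\alpha)r_2 = r$, and for the sequential implementation $L_\textnormal{S}$ equals the normalized total number of transmitted bits, namely $\alpha L_1^*(r_1,s)+(1-\alpha)L_2^*(r_2,s)$. Minimizing over feasible $(\alpha,r_1,r_2)$ shows $L_\textnormal{S}^*(r,s,M_0)$ is at most the right-hand side of \eqref{EqSeq}.

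\emph{Converse.} Let an arbitrary feasible scheme with computation load $r$ and Master storage $M_0$ be given, with mapped sets $\mathcal{M}_0,\ldots,\mathcal{M}_K$ and $|\mathcal{M}_0|\le M_0$. Set $\mathcal{B}=\mathcal{M}_0$, $\mathcal{A}=\{w_1,\ldots,w_N\}\setminus\mathcal{B}$, $\alpha=\tfrac{|\mathcal{A}|}{N}\in[1-\tfrac{M_0}{N},1]$, let $r_1,r_2$ be the average edge-node mapping multiplicities of the files in $\mathcal{A}$ and in $\mathcal{B}$ (so $\alpha r_1+(1-\alpha)r_2=r$), and let $V_{\mathcal{B}}$ be the tuple of all intermediate values produced by mapping the files in $\mathcal{B}$; note $X_0$ is a function of $V_{\mathcal{B}}$. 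Then
\begin{IEEEeqnarray*}{rCl}
l_0+l_1+\cdots+l_K &\ge& H(X_0,X_1,\ldots,X_K)\\
&=& H(X_1,\ldots,X_K\mid V_{\mathcal{B}}) + I(X_0,X_1,\ldots,X_K;V_{\mathcal{B}}).
\end{IEEEeqnarray*}
Conditioned on $V_{\mathcal{B}}$, the symbol $X_0$ is constant and each $X_k$ depends only on the intermediate values of $\mathcal{M}_k\cap\mathcal{A}$, so $(X_1,\ldots,X_K)$ is a valid CDC delivery on the $|\mathcal{A}|$ files with edge load $r_1$; applying the CDC converse of \cite{CDC} under this conditioning (which leaves the $\mathcal{A}$-intermediate values uniform and independent) gives $H(X_1,\ldots,X_K\mid V_{\mathcal{B}})\ge |\mathcal{A}|\,QT\,L_1^*(r_1,s)$. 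For the mutual-information term, the $K$ edge nodes must jointly reconstruct all intermediate values of $\mathcal{B}$ from $(X_0,X_1,\ldots,X_K)$ and what they already hold about $\mathcal{B}$; the converse for the generalized ACDC configuration (a single helper that has mapped $\mathcal{B}$, $K$ solvers of load $r_2$, symmetric $s$-fold reduce assignment) gives $I(X_0,X_1,\ldots,X_K;V_{\mathcal{B}})\ge |\mathcal{B}|\,QT\,L_2^*(r_2,s)$. Dividing by $QNT$ yields $L_\textnormal{S}\ge \alpha L_1^*(r_1,s)+(1-\alpha)L_2^*(r_2,s)$; since $L_1^*$ and $L_2^*$ are piecewise-linear convex with breakpoints at the integers, this lower bound is no smaller than the minimum in \eqref{EqSeq}, which completes the converse.

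\emph{Main obstacle and remaining work.} The file partition, the entropy chain rule, and the achievability bookkeeping are routine. The substance is in the two converse inputs: first, verifying that the CDC converse of \cite{CDC} goes through after conditioning on $V_{\mathcal{B}}$ and despite each edge message also encoding $\mathcal{B}$-intermediate values; and, more importantly, establishing the generalized-ACDC converse $I(X_0,\ldots,X_K;V_{\mathcal{B}})\ge |\mathcal{B}|QT\,L_2^*(r_2,s)$ for arbitrary $s$, which must correctly charge the helper's message $X_0$ and handle the $s$-fold symmetric Reduce assignment --- this is where the bulk of the technical effort lies and where the extension of ACDC from $s=1$ to $s\ge 1$ is genuinely needed. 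A final, minor step is to argue, using the piecewise-linearity of $L_1^*$ and $L_2^*$, that the real-valued $(\alpha,r_1,r_2)$ produced by the converse can be reduced to the integer minimization stated in \eqref{EqSeq}.
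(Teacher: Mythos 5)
Your achievability argument matches the paper's: partition the files by Master membership, run CDC on the Master-free part and the generalized (multicast, MDS-coded) ACDC on the Master-mapped part, and memory-share; that half is fine. The converse, however, takes a genuinely different route from the paper's, and its central step does not hold as stated. You decompose $H(X_0,\ldots,X_K)=H(X_1,\ldots,X_K\mid V_{\mathcal{B}})+I(X_0,\ldots,X_K;V_{\mathcal{B}})$ and claim $I(X_0,\ldots,X_K;V_{\mathcal{B}})\ge |\mathcal{B}|QT\,L_2^*(r_2,s)$. This inequality is false in general: an edge node's message may XOR a $\mathcal{B}$-intermediate value with an $\mathcal{A}$-intermediate value already held by the intended receiver, delivering the $\mathcal{B}$-value while contributing nothing to $I(X_{0:K};V_{\mathcal{B}})$ (the $\mathcal{A}$-values act as a one-time pad, since $V_{\mathcal{A}}\perp V_{\mathcal{B}}$). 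The quantity any cut-set/induction converse naturally controls is the \emph{conditional} mutual information $I(X_{0:K};V_{\mathcal{B}}\mid V_{\mathcal{A}})$, and since $V_{\mathcal{A}}\perp V_{\mathcal{B}}$ one has $I(X_{0:K};V_{\mathcal{B}}\mid V_{\mathcal{A}})\ge I(X_{0:K};V_{\mathcal{B}})$ --- the wrong direction --- so the bound on the conditional version cannot be substituted back into your chain-rule identity, and the two per-group bounds do not recombine into a lower bound on $H(X_{0:K})$.

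The paper avoids this issue by never splitting the entropy at all: it applies a single counting lower bound (Lemma 1 of [Optimality], stated for $K+1$ nodes) to the entire system, obtaining $L\ge\frac{1}{QN}\sum_{c,d}a_{c,d}\frac{d}{c+d-1}$, and only then partitions the \emph{sum over intermediate values} according to whether the Master has mapped the underlying file (the counts $b_{j,1}$ versus $b_{j,2}$, with per-value coefficients $\frac{d}{j+d-1}$ and $\frac{d}{j+d}$ respectively). Jensen's inequality applied separately to each sub-sum, with $r_1$ and $r_2$ defined as the average mapping multiplicities of the two groups, then yields $L\ge\alpha L_1^*(r_1,s)+(1-\alpha)L_2^*(r_2,s)$ directly, which is all the sequential converse needs. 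If you want to salvage your decomposition you would have to prove the counting lemma for the $(K{+}1)$-node system yourself (or reprove the ACDC converse at the level of total entropy rather than mutual information); as written, the second term of your split is the step that fails.
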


\begin{proof}
We divide the overall system into two subsystems. Subsystem 1 is responsible for mapping $ \alpha  N$ input files, Subsystem 2 is responsible for mapping $(1- \alpha ) N$ input files, for some $ (1-\frac{M_0}{N}) \leq \alpha  \leq 1$. The computation load of Subsystem 1 is $r_1$, and the computation load of Subsystem 2 is $r_2$, for some $0 \leq r_1 \leq K$ and $0 \leq r_2 \leq K$. The communication time of Subsystem 1 is $L_1(r_1,s)$, and the computation time of Subsystem 2 is $L_2(r_2,s)$, for some $0 \leq L_1 \leq 1$, $0 \leq L_2 \leq 1$.
%Note that the computation load of Subsystem 1, $r_1$, is 0 if and only if $\alpha = 0$, i.e.,  no file is mapped in Subsystem 1. If $\alpha > 0$, some files are mapped in Subsystem 1 and $r_1$ cannot be 0. In the next, when we discuss the overall system, we let $r_1 \geq 0$, when we only discuss Subsystem 1, we let $r_1 > 0$. 
See more details in Section \ref{Scheme}. The converse proof is given in Appendix \ref{Converse}. 
\end{proof}

\begin{Remark}
When $s=1$, the coding scheme for Subsystem 2 is essentially identical to the coded caching scheme introduced in \cite{Centralized}, where a central server connects through a shared and error-free link to $K$ edge nodes. In the placement phase, each of the $K$ edge nodes with cache size $M_1$ prefetches some files, in the delivery phase, the central server broadcasts coded messages needed by the edge nodes through the shared link, thereby meeting all the edge nodes' requests. When the number of files $N$ and the number of edge nodes $K$ satisfy $N \geq K$, this scheme achieves a minimum rate (communication load) of $R_C(M_1) = K \cdot (1 - \frac{M_1}{N}) \cdot \frac{1}{1 + \frac{KM_1}{N}}$. We let $r = \frac{KM_1}{N}$, and the normalized minimum rate (communication time) is $L = \frac{R_C(M_1)}{K} = \frac{1}{r+1} \cdot (1 - \frac{r}{K})$, which matches the minimum communication time of Subsystem 2. 
\end{Remark}

Let $s=1$ in  \eqref{EqSeq}, then we obtain the following corollary.
\begin{Corollary} \label{LS(r)}
	The computation-communication function of the overall distributed computing system for the sequential implementation with $s=1$, $L_\textnormal{S}^*(r,M_0)$, is given by 
\begin{align}\label{EqLoads1}
L_\textnormal{S}^*(r,M_0)%& = L_\textnormal{S}^*(r_1,r_2) \nonumber
=\!\!\!\!\!\! \min_{\substack{(\alpha, r_1,r_2):\\
 \alpha  r_1 + (1 - \alpha )  r_2=r,\\r_1,r_2\in\{0,\ldots,K\},\\
 \alpha \in [1-\frac{M_0}{N},1]
 % \alpha \in \frac{\{N-M_0,\ldots, N\}}{N}
  } }  \!\!\!\!\!\!
  \alpha L_1^*(r_1) + (1 - \alpha) L_2^*(r_2),
%\\&= \alpha \cdot \frac{1}{r_1} \cdot (1-\frac{r_1}{K})+(1-\alpha) \cdot \frac{1}{r_2+1} \cdot (1-\frac{r_2}{K})
\end{align}
 for  some $r\in[0,K]$,     where  $L_1^*(r_1)$ is the lower convex envelope of the points $\{(r_1, L_1(r_1)) : r_1\in \{0, \dots, K\}\}$, and  $L_2^*(r_2)$ is the lower convex envelope of the points $\{(r_2, L_2(r_2)) : r_2\in \{0, \dots, K\}\}$.
 \end{Corollary}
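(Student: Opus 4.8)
The plan is to derive Corollary~\ref{LS(r)} as a direct specialization of Theorem~\ref{LS(rs)} by setting $s=1$ and then verifying that the functions appearing in the theorem collapse to the claimed closed forms. First I would substitute $s=1$ into the definitions \eqref{EqL1}. In $L_1(r_1,s)$, the summation index runs over $\ell=\max\{r_1+1,1\}$ to $\min\{r_1+1,K\}$; since $r_1\ge 0$ this forces $\ell=r_1+1$ (assuming $r_1+1\le K$, i.e. $r_1<K$), and the single term becomes $\binom{K}{r_1+1}\binom{r_1}{r_1}\binom{r_1}{0}/(\binom{K}{r_1}\binom{K}{1})\cdot\frac{r_1+1}{r_1}$, which simplifies via $\binom{K}{r_1+1}/\binom{K}{r_1}=(K-r_1)/(r_1+1)$ to $\frac{1}{r_1}(1-\frac{r_1}{K})=L_1(r_1)$. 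Similarly, in $L_2(r_2,s)$ the sum collapses to the single term at $\ell=r_2+1$, giving $\binom{K}{r_2+1}\binom{r_2}{r_2}\binom{r_2}{0}/(\binom{K}{r_2}\binom{K}{1})=\frac{K-r_2}{r_2+1}\cdot\frac{1}{K}=\frac{1}{r_2+1}(1-\frac{r_2}{K})=L_2(r_2)$. This reproduces the closed forms already asserted in the paragraph preceding Theorem~\ref{LS(rs)}, so I would simply cite that identity rather than redo the binomial bookkeeping in detail.

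Next I would handle the lower convex envelopes. Having shown $L_1(r_1,1)=L_1(r_1)$ and $L_2(r_2,1)=L_2(r_2)$ as functions on the integer grid $\{0,\dots,K\}$, the lower convex envelope of $\{(r_1,L_1(r_1,1))\}$ is by definition exactly the lower convex envelope of $\{(r_1,L_1(r_1))\}$, i.e. $L_1^*(r_1,1)=L_1^*(r_1)$; likewise $L_2^*(r_2,1)=L_2^*(r_2)$. (One should note the boundary convention at $r_1=K$: the term $1/r_1\cdot(1-r_1/K)$ vanishes there, consistent with an empty shuffle, and at $r_1=0$ one takes the value to be the natural limit / $+\infty$-type convention already implicit in the theorem; since these are the same points used in both statements, the envelopes agree regardless of convention.) Plugging $L_1^*(r_1,1)=L_1^*(r_1)$ and $L_2^*(r_2,1)=L_2^*(r_2)$ into the optimization \eqref{EqSeq}, whose feasible set $\{(\alpha,r_1,r_2):\alpha r_1+(1-\alpha)r_2=r,\ r_1,r_2\in\{0,\dots,K\},\ \alpha\in[1-\tfrac{M_0}{N},1]\}$ is unchanged by the substitution $s=1$, yields precisely \eqref{EqLoads1}.

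I do not expect any genuine obstacle here, since the corollary is a pure substitution; the only thing requiring care is the degenerate-range bookkeeping in the two sums (confirming that for $s=1$ exactly one value of $\ell$ survives, and that the edge cases $r_i\in\{0,K\}$ are handled by the same limiting conventions in both the general and specialized statements). Accordingly the write-up would consist of: (i) invoke the already-established identities $L_1(r_1,1)=L_1(r_1)$, $L_2(r_2,1)=L_2(r_2)$; (ii) conclude equality of the corresponding lower convex envelopes; (iii) observe the constraint set in \eqref{EqSeq} is independent of $s$; (iv) substitute into \eqref{EqSeq} to obtain \eqref{EqLoads1}. This is at most a few lines.
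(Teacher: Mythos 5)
Your proposal is correct and follows essentially the same route as the paper: the paper's own ``proof'' of Corollary~\ref{LS(r)} is precisely the substitution $s=1$ into Theorem~\ref{LS(rs)}, relying on the identities $L_1(r_1,1)=\frac{1}{r_1}(1-\frac{r_1}{K})$ and $L_2(r_2,1)=\frac{1}{r_2+1}(1-\frac{r_2}{K})$ stated just before the theorem, which you re-derive by collapsing each sum to the single index $\ell=r_i+1$. (The only nit: in your single-term evaluation the factor $\binom{r_1}{\ell-s}$ at $\ell=r_1+1$, $s=1$ is $\binom{r_1}{r_1}$ rather than $\binom{r_1}{0}$, but both equal $1$ so nothing changes.)
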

 
 \begin{Remark}
 When ignoring the integrality constraints in \eqref{EqLoads1}, we prove in Appendix \ref{Optimal_S}  that   the optimal $\alpha$, $r_1$, and  $r_2$ can be approximated as follows: $\alpha^* = (1 - \frac{M_0}{N})$, $r_1^* = \frac{r + 1 -(1 - \frac{M_0}{N})}{\sqrt{1 + \frac{1}{K}} - \left(\sqrt{1 + \frac{1}{K}} - 1 \right) \cdot (1 - \frac{M_0}{N})},\ \\r_2^* = \frac{\sqrt{1 + \frac{1}{K}}r -(1 - \frac{M_0}{N})}{\sqrt{1 + \frac{1}{K}} - \left(\sqrt{1 + \frac{1}{K}} - 1 \right) \cdot (1 - \frac{M_0}{N})} $. {The choice of $\alpha^*$ reflects that the overall communication time for sequential implementation will be minimized if the Master node stores  and maps as many files as possible. In particular when $M_0=N$, i.e., the Master node is sufficiently large to store (map) all the files in the system, the optimal solutions are $\alpha^* = 0$, $r_1^*=0$ and  $r^*_2=r$.}
 \end{Remark}
%\begin{proof}
%See Appendix \ref{Optimal_S}.
%\end{proof}

\begin{Theorem} \label{LP(rs)}

The computation-communication function of the overall distributed computing system for the parallel implementation when $s\geq 1$, $L_\textnormal{P}^*(r, s,M_0)$, is given by
 %$\alpha \cdot L_1^*(r_1) + (1 - \alpha) \cdot L_2^*(r_2): r = \alpha r_1 + (1 - \alpha) \cdot r_2, (1 - \frac{M_0}{N}) \leq \alpha \leq 1$. 
\begin{IEEEeqnarray}{rCl}\label{EqPara}
L_\textnormal{P}^*(r,s,M_0)= \!\!\!\!\!\! \min_{\substack{(\alpha, r_1,r_2):\\
 \alpha  r_1 + (1 - \alpha )  r_2=r,\\r_1,r_2\in\{0,\ldots,K\},\\
 \alpha \in [1-\frac{M_0}{N},1]
 % \alpha \in \frac{\{N-M_0,\ldots, N\}}{N}
  } }  \!\!\!\!\!\!
  \max \{\alpha   L_1^*(r_1, s), (1 \! - \!  \alpha)  L_2^*(r_2, s)\},\nonumber\\
\end{IEEEeqnarray}
 for some $r\in[0,K]$,    
 where  $L_1^*(r_1, s)$ is the lower convex envelope of the points $\{(r_1, L_1(r_1,s)): r_1\in \{0, \dots, K\}\}$, and  $L_2^*(r_2, s)$ is the lower convex envelope of the points $\{(r_2, L_2(r_2,s)): r_2\in \{0, \dots, K\}\}$.

\end{Theorem}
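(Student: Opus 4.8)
The plan is to mirror the proof of Theorem~\ref{LS(rs)}: reuse the same mixed coded computing scheme for the achievability and extend the converse of Appendix~\ref{Converse}, the only structural change being that the Master node's transmission now overlaps in time with the edge nodes' transmissions, so ``sum'' is replaced by ``max'' throughout the analysis.

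For the achievability I would again split the $N$ input files into a Subsystem~1 of $\alpha N$ files and a Subsystem~2 of $(1-\alpha)N$ files with $\alpha\in[1-\frac{M_0}{N},1]$, chosen so that Subsystem~2 consists only of files the Master node maps (feasible since $(1-\alpha)N\le M_0$). In Subsystem~1 the edge nodes run the CDC scheme of~\cite{CDC} among themselves with computation load $r_1$; since the Master node maps none of these files it does not participate, and this shuffle occupies only the edge-node channel for a normalized duration $\alpha L_1^*(r_1,s)$ (using memory-sharing over the corner points in~\eqref{EqL1} to realize non-integer $r_1$ and the lower convex envelope). In Subsystem~2 the generalized ACDC scheme is used: the Master node multicasts coded intermediate values while the edge nodes only map (with load $r_2$) and decode, so this shuffle occupies only the Master's channel for a normalized duration $(1-\alpha)L_2^*(r_2,s)$. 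Because in the parallel implementation these two transmissions proceed simultaneously, the total communication time is $\max\{\alpha L_1^*(r_1,s),(1-\alpha)L_2^*(r_2,s)\}$; minimizing over all feasible triples $(\alpha,r_1,r_2)$ with $\alpha r_1+(1-\alpha)r_2=r$ gives the upper bound in~\eqref{EqPara}. The usual $\epsilon$-and-large-$N$ arguments take care of the divisibility conditions $\frac{Q}{\binom{K}{s}},\alpha N,(1-\alpha)N\in\mathbb{N}$.

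For the converse I would revisit Appendix~\ref{Converse}: I expect that argument to actually produce, for any feasible scheme with messages $X_0,X_1,\dots,X_K$, a feasible triple $(\alpha,r_1,r_2)$ for which the two bounds $l_1+\dots+l_K\ge QNT\,\alpha L_1^*(r_1,s)$ and $l_0\ge QNT\,(1-\alpha)L_2^*(r_2,s)$ hold \emph{separately} rather than only their sum. The key reason this separation should be available is an asymmetry in the model: the Master's symbol $X_0$ is a function of $(\overrightarrow{g_n}:w_n\in\mathcal{M}_0)$ only, so it can carry information about at most $M_0$ files, whence the $\ge N-M_0$ files outside $\mathcal{M}_0$ must be shuffled by the edge nodes alone---this is a CDC instance and yields the first bound via the cut-set/entropy argument of~\cite{CDC}; the second bound would come from isolating the contribution of $X_0$ in the same entropy inequalities. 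Granting the separation, for any feasible scheme $L_\textnormal{P}=\frac{1}{QNT}\max\{l_1+\dots+l_K,\,l_0\}\ge\max\{\alpha L_1^*(r_1,s),(1-\alpha)L_2^*(r_2,s)\}$, hence $L_\textnormal{P}^*$ is lower bounded by the right-hand side of~\eqref{EqPara} after minimizing over feasible triples.

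The main obstacle is exactly this last point: establishing the converse \emph{componentwise}. A naive use of the sequential converse of Theorem~\ref{LS(rs)} together with $L_\textnormal{S}=L_\textnormal{P,edge}+L_\textnormal{P,master}$ only gives $L_\textnormal{P}^*\ge\frac{1}{2}L_\textnormal{S}^*$, so one must show that a scheme cannot beat $\min_{(\alpha,r_1,r_2)}\max\{\cdot,\cdot\}$ by off-loading part of the Subsystem~2 shuffle from the Master's channel onto the (possibly idle) edge-node channel---intuitively such an off-load simply re-labels those files as Subsystem~1 files and is already accounted for by the minimization over $\alpha$, but turning this intuition into an information-theoretic statement that pins down the right $(\alpha,r_1,r_2)$ for an arbitrary scheme is the delicate part. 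A secondary, purely technical nuisance is that the objective $\max\{\alpha L_1^*(r_1,s),(1-\alpha)L_2^*(r_2,s)\}$ is not jointly convex, so one should verify that the minimization is attained and that memory-sharing between corner points of the two envelopes does not produce an achievable point strictly below the stated minimum.
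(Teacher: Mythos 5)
Your achievability argument coincides with the paper's: the same two-subsystem split, CDC on the $\alpha N$ files unknown to the Master, the generalized ACDC multicast on the $(1-\alpha)N$ files the Master maps, run concurrently so that the communication time is the maximum of the two normalized durations. The converse framework you sketch is also the paper's (Appendix~\ref{Converse}): apply the counting bound of Lemma~\ref{Lemma1} to the $(K+1)$-node system, split the intermediate values according to whether the Master has mapped the underlying file, and use Jensen's inequality on each part to obtain $\alpha L_1^*(r_1,s)$ and $(1-\alpha)L_2^*(r_2,s)$ with $\alpha=1-|\mathcal{M}_0|/N$. The first componentwise bound you state, $l_1+\dots+l_K\ge QNT\,\alpha L_1^*(r_1,s)$, is sound for exactly the reason you give: $X_0$ is a function of the Master's map outputs only, so the intermediate values of files outside $\mathcal{M}_0$ must be delivered by the edge nodes alone.

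The gap is the second componentwise bound, and it is a genuine one. As a standalone inequality, $l_0\ge QNT\,(1-\alpha)L_2^*(r_2,s)$ with $\alpha$ fixed by the Master's cache content is false: a scheme in which the Master stores files but transmits nothing, the edge nodes shuffling everything as in plain CDC, has $l_0=0$. So the separation cannot simply be ``granted.'' What must be shown is that any deficit of $l_0$ below $(1-\alpha)L_2^*$ forces a compensating --- in fact strictly larger, since an edge-node delivery of a value known to $j$ edge nodes and the Master costs $\frac{d}{j+d-1}$ per value versus the Master's $\frac{d}{j+d}$ --- excess on the edge-node channel, and that after this re-balancing the resulting maximum never drops below the stated minimum over $(\alpha,r_1,r_2)$; this is precisely where the freedom $\alpha\ge 1-\frac{M_0}{N}$ (rather than $\alpha=1-\frac{M_0}{N}$) in \eqref{EqPara} is needed, because off-loading effectively moves the operating point to a different feasible $\alpha$. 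The paper closes this step by comparing the two per-value costs (the factor $\frac{\ell}{\ell-1}>1$) and concluding that $L_2^*$ must be incurred solely by the Master's transmission. Your proposal correctly names this obstacle and offers the right intuition (off-loading amounts to re-labelling files into Subsystem~1), but it never converts that intuition into an inequality, so as written the converse is not proved; the fallback $L_\textnormal{P}^*\ge\frac{1}{2}L_\textnormal{S}^*$ you mention is strictly weaker than \eqref{EqPara} and cannot substitute for it.
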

\begin{proof}
We divide the  system into two subsystems in the same way as the scheme for  Theorem \ref{LS(rs)}, and then let the two subsystems send messages symbols simultaneously, leading to a computation-communication function in \eqref{EqPara}. See more details in Section \ref{Scheme}. The converse proof is given in Appendix \ref{Converse}.%in  \cite[Section II]{CCEP}. % 
\end{proof} 
Let $s=1$ in  \eqref{EqPara}, then we obtain the following corollary.
\begin{Corollary} \label{LP(r)}
	The computation-communication function of the overall distributed computing system for the parallel implementation with $s=1$, $L_\textnormal{P}^*(r)$, is given by 
\begin{IEEEeqnarray}{rCl}
L_\textnormal{P}^*(r,M_0)%& = L_\textnormal{S}^*(r_1,r_2) \nonumber
= \!\!\!\!\!\! \min_{\substack{(\alpha, r_1,r_2):\\
 \alpha  r_1 + (1 - \alpha )  r_2=r,\\r_1,r_2\in\{0,\ldots,K\},\\
 \alpha \in [1-\frac{M_0}{N},1]
 % \alpha \in \frac{\{N-M_0,\ldots, N\}}{N}
  } }  \!\!\!\!\!\!
  \max\{  \alpha L_1^*(r_1), (1- \alpha) L_2^*(r_2)\}, \quad~
\end{IEEEeqnarray}
%\noindent where\;\;$r_1,r_2\in \{0, \ldots, K\}$  for sufficient large $T$ and $r = \alpha r_1 + (1-\alpha) r_2$. For general $0 \leq r_1 \leq K$ and $0 \leq r_2 \leq K$, we let $L_1^*(r_1)$ be the lower convex envelope of the points $%\{(r_1, \frac{1}{r_1}  (1-\frac{r_1}{K})): r_1\in \{0, \ldots, K\}\}$, and  $L_2^*(r_2)$ be the lower convex envelope of the points $\{(r_2, \frac{1}{r_2+1}  (1-\frac{r_2}{K})): r_2\in \{0, \ldots, K\}\}$, and for general $0 \leq r %%
%\leq K$, $L_\textnormal{P}^*(r)$ {\color{red} is given by}
%\[\min_{\substack{\{(\alpha,r_1,r_2): 0\leq \alpha \leq 1,\\
% \alpha r_1 + (1 - \alpha)  r_2=r\}
%\\ (1 - \frac{M_0}{N}) \leq \alpha \leq 1}
%} } 
%\max\{\alpha \cdot L_1^*(r_1), (1 - \alpha) \cdot L_2^*(r_2)\}.
% \]
for some $r\in[0,K]$,  where  $L_1^*(r_1)$ is the lower convex envelope of the points $\{(r_1, L_1(r_1) ): r_1\in \{0, \dots, K\}\}$, and  $L_2^*(r_2)$ is the lower convex envelope of the points $\{(r_2, L_2(r_2) ): r_2\in \{0, \dots, K\}\}$.
\end{Corollary}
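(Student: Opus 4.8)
The plan is to obtain Corollary~\ref{LP(r)} as a direct specialization of Theorem~\ref{LP(rs)} to $s=1$: all the substantive content --- the two-subsystem achievable scheme executed in parallel (Section~\ref{Scheme}) and the matching converse (Appendix~\ref{Converse}) --- is already contained in Theorem~\ref{LP(rs)}, so what remains is only to check that every object appearing in~\eqref{EqPara} reduces, at $s=1$, to the object named in the corollary.

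The first step is to verify that the general expressions $L_1(r_1,s)$ and $L_2(r_2,s)$ of~\eqref{EqL1} collapse to the single-term formulas $L_1(r_1)$ and $L_2(r_2)$ when $s=1$. For $s=1$ the summation index in $L_1(r_1,s)$ runs from $\max\{r_1+1,1\}=r_1+1$ up to $\min\{r_1+1,K\}$, hence it contains the single term $\ell=r_1+1$ (and is empty precisely when $r_1=K$, matching $L_1(K)=0$). Evaluating that term with $\binom{\ell-1}{r_1}=\binom{r_1}{r_1}=1$, $\binom{r_1}{\ell-s}=\binom{r_1}{r_1}=1$, $\binom{K}{s}=\binom{K}{1}=K$, and $\binom{K}{r_1+1}/\binom{K}{r_1}=(K-r_1)/(r_1+1)$ yields
\begin{IEEEeqnarray}{rCl}
L_1(r_1,1)&=&\frac{\binom{K}{r_1+1}}{\binom{K}{r_1}K}\cdot\frac{r_1+1}{r_1}=\frac{1}{r_1}\left(1-\frac{r_1}{K}\right)=L_1(r_1),
\end{IEEEeqnarray}
and the same computation with the factor $\ell/(\ell-1)$ removed gives $L_2(r_2,1)=\frac{1}{r_2+1}(1-\frac{r_2}{K})=L_2(r_2)$; this reproduces the identities recorded just after~\eqref{EqL1}. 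Because the lower convex envelope of a discrete set of points is a function of that point set alone, and the point sets $\{(r_1,L_1(r_1,1))\}_{r_1=0}^K$ and $\{(r_1,L_1(r_1))\}_{r_1=0}^K$ coincide (likewise for $L_2$), we conclude $L_1^*(r_1,1)=L_1^*(r_1)$ and $L_2^*(r_2,1)=L_2^*(r_2)$ for all $r_1,r_2\in[0,K]$.

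The second step is then immediate: setting $s=1$ throughout~\eqref{EqPara}, the left-hand side $L_\textnormal{P}^*(r,1,M_0)$ is by definition $L_\textnormal{P}^*(r,M_0)$, and substituting the convex-envelope identities above into the right-hand side turns its minimization into exactly the one asserted in Corollary~\ref{LP(r)}. I do not expect any real obstacle here; the only point worth a line of care is the degenerate behaviour of the single-term formulas at the endpoints --- $L_1(0)=+\infty$ (the pole of $1/r_1$, equivalently the $\ell=1$ term of the sum) and $L_1(K)=0$ from an empty sum, and symmetrically for $L_2$ --- but these are precisely the conventions inherited from~\eqref{EqPara}, and they are harmless in the minimization, since any triple $(\alpha,r_1,r_2)$ that makes the $\max\{\cdot,\cdot\}$ objective infinite is never a minimizer.
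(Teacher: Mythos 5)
Your proposal is correct and follows exactly the paper's own route: the paper obtains Corollary~\ref{LP(r)} simply by setting $s=1$ in~\eqref{EqPara}, and your explicit verification that the single-term evaluations of~\eqref{EqL1} at $s=1$ reproduce $L_1(r_1)=\frac{1}{r_1}(1-\frac{r_1}{K})$ and $L_2(r_2)=\frac{1}{r_2+1}(1-\frac{r_2}{K})$ (hence identical lower convex envelopes) just makes that specialization rigorous. The endpoint conventions you note are handled consistently with the paper and do not affect the minimization.
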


{%The close forms of the optimal $\alpha$ and $r_1$, $r_2$ are intricate even when ignoring  integrality constraints for the case $s=1$.
 %The following {corollary} presents  upper  bounds for $L_\textnormal{P}^*(r,M_0)$. 
 \begin{Corollary}
 For the parallel implementation with $s=1$, the minimum communication time is upper bounded by 
\begin{IEEEeqnarray}{rCl} \label{EqUpper}
L^*_\textnormal{P}(r,M_0) &\leq & \max\{\frac{1}{2}, 1-\frac{M_0}{N}\}  L_1(r).
\end{IEEEeqnarray}
If the Master node's cache size $M_0$ is sufficiently large such that $M_0\geq \frac{r+1}{2r+1}N $, we have
\begin{IEEEeqnarray}{rCl}
L^*_\textnormal{P}(r,M_0) \leq \frac{L_1(r)L_2(r)}{L_1(r)+L_2(r)}=\frac{1}{2r+1}\left(1-\frac{r}{K}\right).
\end{IEEEeqnarray}

 \end{Corollary}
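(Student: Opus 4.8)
The plan is to obtain both displayed inequalities directly from the exact characterization in Corollary~\ref{LP(r)}. Since that corollary writes $L^*_\textnormal{P}(r,M_0)$ as a \emph{minimum} of $\max\{\alpha L_1^*(r_1),(1-\alpha)L_2^*(r_2)\}$ over all triples $(\alpha,r_1,r_2)$ with $\alpha r_1+(1-\alpha)r_2=r$, $r_1,r_2\in\{0,\dots,K\}$ and $\alpha\in[1-\tfrac{M_0}{N},1]$, it suffices to evaluate the objective at one well-chosen feasible triple. I would use the ``no extra file-splitting'' point $r_1=r_2=r$ (so the balance constraint $\alpha r_1+(1-\alpha)r_2=r$ holds for \emph{every} $\alpha$, and since $r_1=r_2=r$ is an integer the envelopes satisfy $L_1^*(r)=L_1(r)$, $L_2^*(r)=L_2(r)$), together with the split parameter chosen to balance the two subsystems as far as the cache constraint allows, namely $\alpha^\dagger\triangleq\max\bigl\{1-\tfrac{M_0}{N},\ \tfrac{r}{2r+1}\bigr\}$.

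Next I would check feasibility and carry out the one-line computation. Since $0<\tfrac{r}{2r+1}<\tfrac12$ and $1-\tfrac{M_0}{N}\le 1$, the value $\alpha^\dagger$ lies in $[1-\tfrac{M_0}{N},1]$, and moreover $\alpha^\dagger\le\max\{\tfrac12,1-\tfrac{M_0}{N}\}$. Using $L_1(r)=\tfrac1r(1-\tfrac rK)$, $L_2(r)=\tfrac1{r+1}(1-\tfrac rK)$, hence $\tfrac{L_2(r)}{L_1(r)}=\tfrac{r}{r+1}$, there are two regimes. If $M_0\ge\tfrac{r+1}{2r+1}N$ then $\alpha^\dagger=\tfrac{r}{2r+1}$, the two terms coincide, $\alpha^\dagger L_1(r)=(1-\alpha^\dagger)L_2(r)=\tfrac{1}{2r+1}\bigl(1-\tfrac rK\bigr)=\tfrac{L_1(r)L_2(r)}{L_1(r)+L_2(r)}$, which gives the second displayed bound; and since $2r+1>2r$ this equals $\tfrac{1}{2r+1}(1-\tfrac rK)\le\tfrac12 L_1(r)=\max\{\tfrac12,1-\tfrac{M_0}{N}\}L_1(r)$. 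If $M_0<\tfrac{r+1}{2r+1}N$ then $\alpha^\dagger=1-\tfrac{M_0}{N}$, so $\alpha^\dagger L_1(r)=(1-\tfrac{M_0}{N})L_1(r)$, while $(1-\alpha^\dagger)L_2(r)=\tfrac{M_0}{N}\cdot\tfrac{r}{r+1}L_1(r)<\tfrac{r+1}{2r+1}\cdot\tfrac{r}{r+1}L_1(r)=\tfrac{r}{2r+1}L_1(r)<\tfrac12 L_1(r)$, so the maximum is $(1-\tfrac{M_0}{N})L_1(r)\le\max\{\tfrac12,1-\tfrac{M_0}{N}\}L_1(r)$. Combining the two regimes yields $L^*_\textnormal{P}(r,M_0)\le\max\{\alpha^\dagger L_1(r),(1-\alpha^\dagger)L_2(r)\}\le\max\{\tfrac12,1-\tfrac{M_0}{N}\}L_1(r)$.

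The main obstacle is not the algebra but the interaction with the cache constraint: because $\alpha$ is pinned below by $1-\tfrac{M_0}{N}$, one cannot always perfectly balance the subsystems, so the genuinely necessary step is to confirm that in the ``clipped'' regime the $L_2$-term $(1-\alpha^\dagger)L_2(r)$ never exceeds the target $\max\{\tfrac12,1-\tfrac{M_0}{N}\}L_1(r)$ — this is exactly where the threshold $M_0\ge\tfrac{r+1}{2r+1}N$ (equivalently $\tfrac{r}{2r+1}\ge 1-\tfrac{M_0}{N}$) comes from and why it is the condition under which the tighter harmonic-mean bound holds. A secondary technical point is that the argument as stated picks $r_1=r_2=r$, which is only literally feasible when $r$ is an integer; for non-integer $r$ one must either interpret $L_1(r),L_2(r)$ as the memory-sharing (piecewise-linear) values $L_1^*(r),L_2^*(r)$, or restrict to the operationally meaningful integer computation loads, and I would note this explicitly rather than belabour it.
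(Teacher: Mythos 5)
Your proposal is correct and follows essentially the same route as the paper's proof in Appendix~\ref{LPrange}: both fix $r_1=r_2=r$ in the achievability expression and then choose $\alpha$ either to balance the two subsystem terms (when $M_0\geq\frac{r+1}{2r+1}N$) or clipped at $1-\frac{M_0}{N}$ otherwise. Your single unified choice $\alpha^\dagger=\max\{1-\frac{M_0}{N},\frac{r}{2r+1}\}$ and the explicit remark about integrality of $r$ are minor presentational refinements of the same argument.
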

 \begin{proof}
 See proof in Appendix \ref{LPrange}.
 \end{proof}
 
\begin{Remark}
The upper bound of $L_\textnormal{P}^*(r,M_0)$ in \eqref{EqUpper} shows that   the  parallel scheme achieves at most $\max\{\frac{1}{2}, 1-\frac{M_0}{N}\}$ of   communication time of CDC  for the case $s = 1$. 
\end{Remark}

%Moreover, for the sequential implementation, we formally prove in Appendix \ref{Optimal_S} that the overall communication time will be minimized if the Master node stores (maps) as many files as possible, i.e., when $ \alpha = (1 - \frac{M_0}{N})$, $L_\textnormal{S}$ reaches its minimum. For the parallel implementation, we prove in Appendix \ref{Optimal_P} that when the cache size of the Master node, $M_0$, is sufficiently large such that the range of $ \alpha$ is wide, the overall communication time reaches its minimum if and only if $ \alpha r_1 = (1-\alpha) r_2$. When the cache size of the Master node is not large enough such that $ \alpha r_1 = (1- \alpha) r_2$ is not achievable, the overall communication time reaches its minimum if and only if $ \alpha = 1-\frac{M_0}{N}$. We also derive a lower bound and an upper bound of the minimum overall communication time. 
 Fig. 1 depicts the relationship between the computation load and the minimum communication time for the uncoded scheme, CDC scheme, and our proposed schemes for both the sequential implementation and the parallel implementation. We can see that our schemes achieve smaller communication time than the CDC scheme under the same computation load. %Moreover, we give the converse proof of Theorem \ref{LS(rs)} and Theorem \ref{LP(rs)} in Appendix \ref{Converse}. 

%See more details in Appendix \ref{Optimal_S} and Appendix \ref{Optimal_P}.  

\section{Illustrative Examples}\label{Examples}
\begin{figure}
	\centering
	\includegraphics[width=0.65\textwidth,scale=0.2]{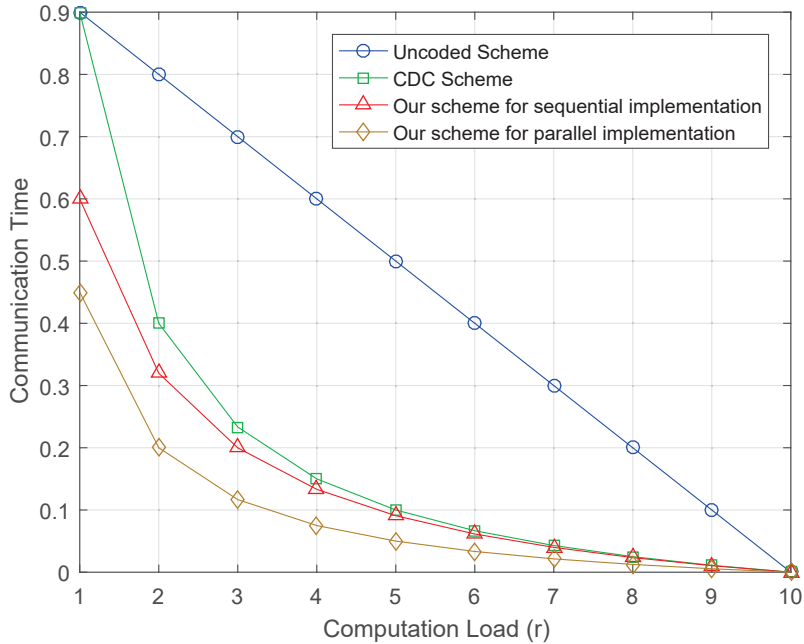} %previous: 0.45 
	\caption{Illustration of the relationship between the computation load $r$ and the communication time $L$ for the uncoded scheme, the  CDC scheme, and our proposed scheme with a Master node under the sequential implementation and the parallel implementation in a distributed computing system consisting of $K=10$ edge nodes, $N=2520$ files and $Q=10$ output functions.} %Our proposed scheme has edge over the traditional schemes. }
	\label{fig_result}
\end{figure}
Consider a MapReduce-type task running in the distributed computing system consisting $N$ input files, $K=3$ edge nodes and a Master node. The 3 edge nodes are responsible for computing $Q=3$ output functions. Each output function here is only mapped by one edge node, i.e., $s=1$. %{We present the following two illustrative examples:%for the case $M_0=N$ and $M_0=N/$.
%based on the proposed system model for both the case $M_0=N$ (when the Master node maps all the files in the system) and the case $M_0=\frac{N}{2}$ (when the Master node only maps some of the files in the system).
%}
\;\\
\textbf{Example 1 ($M_0 = N=6,$ $M_1=4$)}: In this case, the Master node can map all input files. As  shown in Fig. \ref{fig:a}, the three output functions are represented by the red circle, green square and blue triangle respectively. {Edge node 1, 2, and 3 wish to compute Reduce functions represented by the red circle, the green square, and the blue triangle, respectively.} Since $M_1 = 4$, the system has a computation load of $r=2$ (i.e., each file is mapped by 2 different edge nodes). Each edge node makes full use of its cache memories to map 4 different input files. Meanwhile, the Master node maps all the 6 files in the system. After the Map phase, each file is transformed into 3 corresponding intermediate values, each of which is needed by an output function. 

Since there are $N=6$ files in total, the output function in each edge node needs 6 intermediate values to compute their reduce functions. After the Map phase, each edge node gets 4 intermediate values, and the remaining 2 intermediate values have to be obtained from the outside. For example, edge node 1 knows the red circle in File 1, 2, 3, 4, and it requires red circle in File 5 and File 6 from the outside. Since the Master node maps all the files in the system, it knows all the intermediate values required by the edge nodes after the Map phase. Therefore, during the Shuffle phase, the Master node generates 2 bit-wise XORs, denoted by $\oplus$, each of which is a combination of 3 intermediate values needed by the edge nodes, and then broadcasts them to all the 3 edge nodes. Each edge node has already obtained 2 of the 3 intermediate values in each bit-wise XOR after the Map phase, thus it can decode the rest one, which is necessary for its reduce computations. For example, given that edge node 1 knows the green square in File 1 and the blue triangle in File 3, it can decode the desired the red circle in File 5 from the message symbol broadcast by the Master node. Thus, the communication time is $\frac{2}{3 \times 6} = \frac{1}{9}$. Compared to the CDC scheme, which  yields a communication time of $\frac{1}{6}$, the proposed scheme improves the communication time by a factor of $\frac{2}{3}$ with the help of the Master node.  

\begin{figure}
%\flushleft
\centering
\subfigure[]
{\label{fig:a} %% label for first subfigure
\includegraphics[width=0.47\linewidth,scale=1]{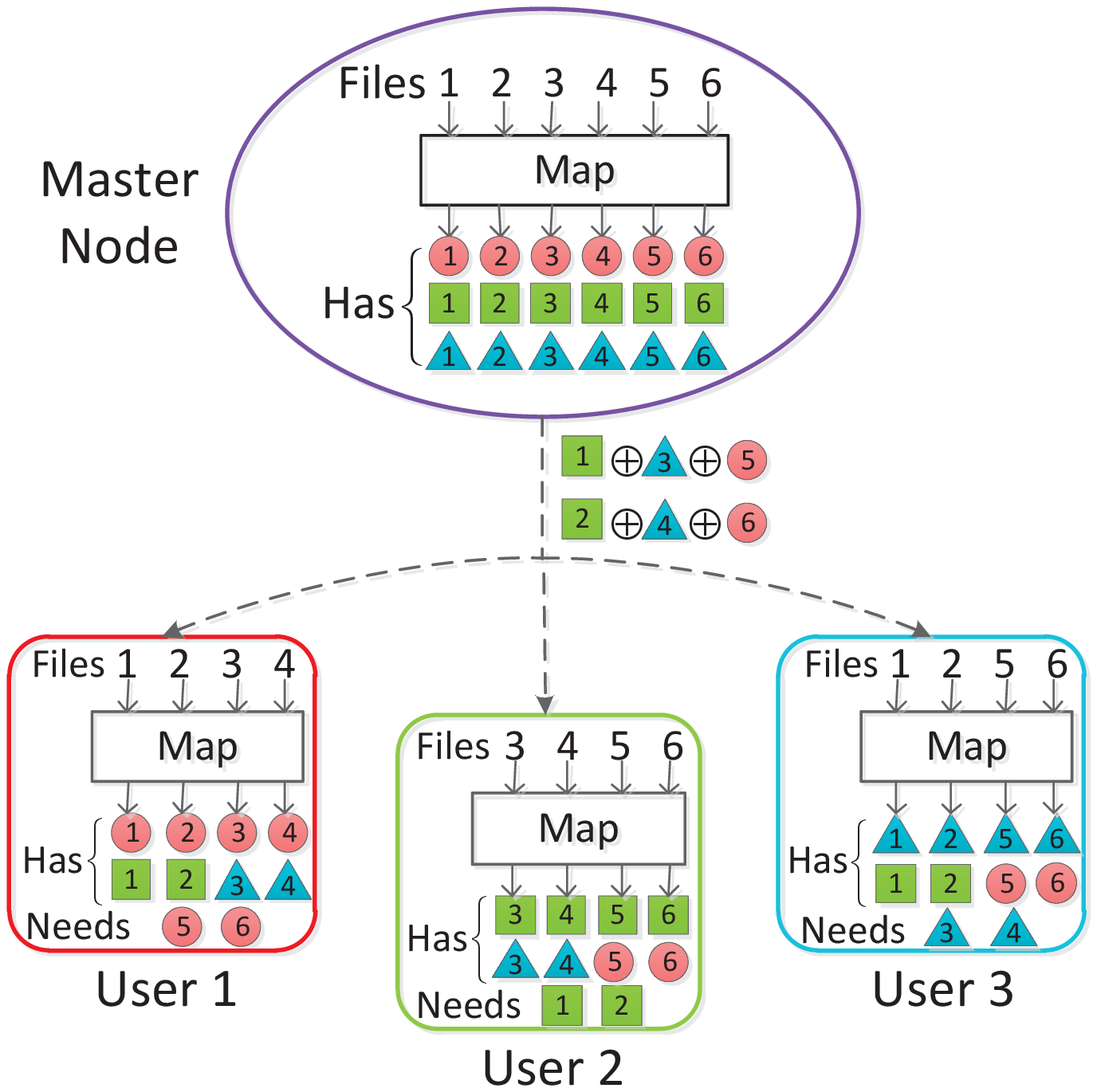}} %in itw width = 0.98
%\includegraphics[width=15cm]{M2=N.eps}}
%\hspace{1in}%使第一个子图占一半空间
\vfill%分行命令
\subfigure[]
{\label{fig:b} %% label for first subfigure
\includegraphics[width=0.6\linewidth,scale=0.01]{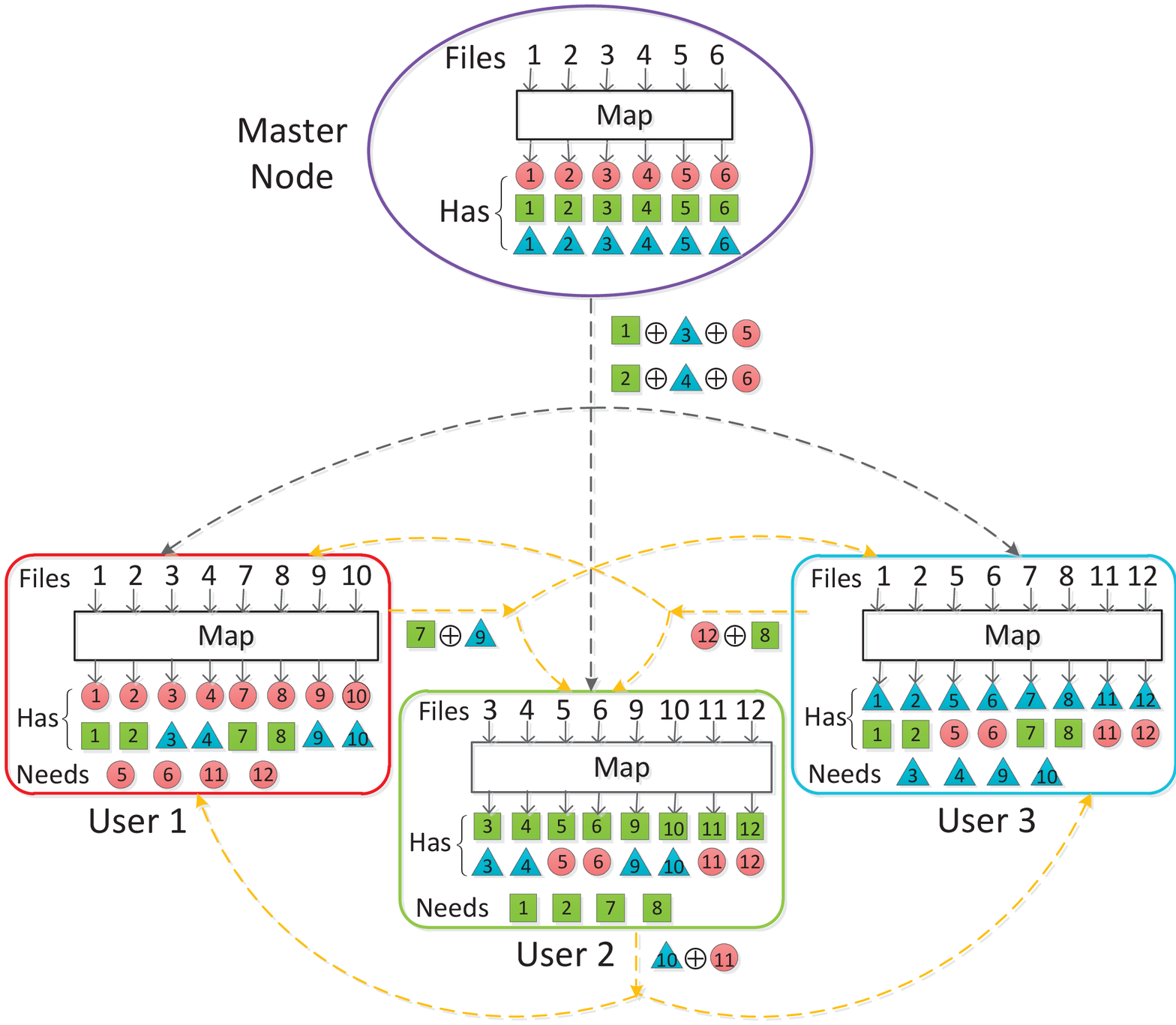}} %in itw width = 0.983
\caption{Illustrations of a distributed computing system responsible for computing $Q=3$ output functions from $N$ input files, using $K = 3$ edge nodes and a Master node. (a) $M_0=N$ (when the Master node is large enough to map all the files). (b) $M_0=\frac{N}{2}$ (when the Master node can only map half of the files).
}
\label{figb} %% label for entire figure
\end{figure}

\noindent \textbf{Example 2} ($N=12, M_0 = 6, M_1=8$): In this case,  the Master node cannot map all  input files. As is shown in Fig.  \ref{fig:b}, each edge node maps  $M_1=8$ files, thus the system still has a computation load of $r=2$. In Fig. \ref{fig:b}}, the Master node only maps file 1 to 6, while file 7 to 12 are mapped exclusively by the 3 edge nodes. Therefore,  the locally mapped intermediate values and the messages symbols sent by the Master nodes are not sufficient for the  edge nodes to produce the assigned reduce functions. These edge nodes have to exchange some of the needed intermediate values with each other. The transmitted message symbols are given in  Fig.  \ref{fig:b}. After receiving the message symbols sent by the Master node and other edge nodes, it can be checked that every edge node can decode the desired intermediate values. For the sequential implementation where the Master node and the edge nodes send message symbols sequentially, the resulting communication time is $\frac{2+3}{3 \times 12} = \frac{5}{36}$; for the parallel implementation where $K$ edge nodes and the Master node transmit intermediate values simultaneously, the resulting communication time is $\frac{\max\{2, 3\}}{3 \times 12} = \frac{1}{12}$.

\section{General Achievable Scheme} \label{Scheme}
Divide the overall distributed computing system into two subsystems. %The input files are separated into two groups, each corresponding to one of the subsystems. 
Subsystem 1 is responsible for mapping $N_1 \triangleq \alpha N$ input files, for some $(1 - \frac{M_0}{N}) \leq \alpha \leq 1$. Each edge node is responsible for mapping $M_1'$ of the $N_1$ input files in Subsystem 1, for some $M_1'\in\mathbb{N}$ with $0 \leq M_1' \leq M_1$. We denote the subset of files in Subsystem 1 that are mapped by edge node $k$ as $\mathcal{M}_k^1$, for some $k \in \{1, \ldots, K\}$, $\mathcal{M}_k^1 \subseteq \{w_1, \ldots, w_{N_1}\}$ and $|\mathcal{M}_k^1| = M_1'$. 
Subsystem 2 is responsible for mapping the remaining $N_2 \triangleq (1-\alpha) N$ input files. Each edge node is responsible for mapping $M_1 - M_1'$ of the $N_2$ input files in Subsystem 2. We denote the subset of files in Subsystem 2 that are mapped by edge node $k$ as $\mathcal{M}_k^2$, for some $k \in \{1, \ldots, K\}$, $\mathcal{M}_k^2 \subseteq \{w_{N_1+1}, \ldots, w_{N_1+N_2}\}$ and $|\mathcal{M}_k^2| = M_1 - M_1'$. Moreover, the Master node maps all the $N_2$ input files in Subsystem 2. 
%{\color{magenta}The choices of $\alpha$ and $M_1'$ are subject to $\alpha N, M_1'\in \mathbb{N}$.} 
Specially, we let $M_1'=0$ if $\alpha=0$ and $M_1'=M_1$ if $\alpha = 1$.
 %Let $r_1\triangleq \frac{ \sum_{k=1}^K \left|\mathcal{M}_k^1 \right| }{N_1}$ and $r_2\triangleq \frac{ \sum_{k=1}^K \left|\mathcal{M}_k^2 \right| }{N_2}$.

We define the computation load and communication time of the two subsystems similar to Definition \ref{r_def} and \ref{DefTime}.

\begin{Definition}\label{Defr1r2}
%The computation load of Subsystem 1, denoted by $r_1$, is the total number of files in Subsystem 1 computed across the $K$ edge nodes, normalized by the total number of files in Subsystem 1, i.e., $r_1\triangleq \frac{ \sum_{k=1}^K \left|\mathcal{M}_k^1 \right| }{N_1} = \frac{KM_1'}{N_1} = \frac{KM_1'}{\alpha N}$. 
The computation load of Subsystem 1, denoted by $r_1$, for some $0 \leq r_1 \leq K$, is defined as the  total number of files in Subsystem 1 mapped by the $K$ edge nodes normalized by the total number of files in Subsystem 1, i.e., $r_1\triangleq \frac{ \sum_{k=1}^K \left|\mathcal{M}_k^1 \right| }{N_1} = \frac{KM_1'}{N_1} = \frac{KM_1'}{\alpha N}$.  If $\alpha = 0$, let   $r_1=0$.

The computation load of Subsystem 2, denoted by $r_2$, for some $0 \leq r_2 \leq K$, is defined as the total number of files in Subsystem 2 mapped by the $K$ edge nodes normalized by the total number of files in Subsystem 2, i.e., $r_2\triangleq \frac{ \sum_{k=1}^K \left|\mathcal{M}_k^2 \right| }{N_2} = \frac{K(M_1-M_1')}{N_2} = \frac{K(M_1-M_1')}{(1-\alpha) N}$. If $\alpha = 1$, let   $r_2=0$.

\end{Definition}

\begin{Definition}
%The computation load of Subsystem 1, denoted by $r_1$, is the total number of files in Subsystem 1 computed across the $K$ edge nodes, normalized by the total number of files in Subsystem 1, i.e., $r_1\triangleq \frac{ \sum_{k=1}^K \left| 
%\mathcal{M}_k^1 \right| }{N_1} = \frac{KM_1'}{N_1} = \frac{KM_1'}{\alpha N}$. 
%The communication time of Subsystem 1, denoted by $L_1$, for some $0 \leq L_1 \leq K$, is defined as the (normalized) total number of bits communicated by the $K$ edge nodes during the Shuffle phase, i.e., $L_1 \triangleq \frac{ \sum_{k=1}%^K l_k }{QN_1T} = \frac{ \sum_{k=1}^K l_k }{\alpha QNT}$. 
 The communication load of  Subsystem 1 and Subsystem 2, denoted by $L_1$ and $L_2$, respectively,  are defined as %the   total  bits communicated by the $K$ edge nodes during the Shuffle phase, i.e.,
 \begin{IEEEeqnarray}{rCl}
&& L_1 \triangleq \frac{ \sum_{k=1}^K l_k }{QN_1T} = \frac{ \sum_{k=1}^K l_k }{\alpha QNT},\\
 L&&_2 \triangleq \frac{l_0}{QN_2T} = \frac{l_0}{(1-\alpha) QNT}.
\end{IEEEeqnarray}
%The communication load of of Subsystem 2, denoted by $L_2$, $0 \leq L_2 \leq 1$, is defined as %the (normalized) total number of bits broadcast by the Master node during the Shuffle phase, i.e., 
%$L_2 \triangleq \frac{l_0}{QN_2T} = \frac{l_0}{(1-\alpha) QNT}$.  
\end{Definition}
From Definition \ref{r_def} and \ref{Defr1r2}, we know that the overall computation load $r$ is a convex combination of $r_1$ and $r_2$, the computation load of the two subsystems, i.e., 
\begin{IEEEeqnarray}{rCl}
r = \alpha r_1+(1 - \alpha) r_2.
\end{IEEEeqnarray}

We consider the case where  $N$ is sufficiently large such that  for any $\epsilon>0$, we always have $|\alpha N-\lfloor \alpha N\rfloor|\leq \epsilon$, and $|(1-\alpha) N-{K\choose t}\eta_1|\leq \epsilon$  
%$\lfloor(1-\alpha) N\rfloor = (1-\alpha) N$ for any $\alpha \in [1-\frac{M_0}{N},1]$. 
for any $\alpha \in [1-\frac{M_0}{N},1]$, $t\in\{1,\ldots,K\}$ and some $\eta_1\in\mathbb{N}$.  
Next, we present achievable schemes for the two subsystems respectively. 
%{We consider the general scenario where each output function is computed by $s$ edge nodes, for some $s \geq 1$. %Specially, in the case when $s=1$, this scheme also holds for Theorem \ref{LS(r)} and Theorem \ref{LP(r)}.}  

\subsection{Subsystem 1}
%$A. \; Subsystem \; 1$ 

The MapReduce process of Subsystem 1 is identical to that in the Coded Distributed Computing (CDC) scheme introduced in \cite{CDC}. Therefore, the communication time of this subsystem is 
\begin{IEEEeqnarray}{rCl} \label{L_1} 
L_1(r_1,s) = \mathop{\sum}\limits_{\ell = \max\{r_1+1,s\}}^{\min\{r_1+s,K\}} \frac{\binom{K}{\ell} \binom{\ell-1}{r_1} \binom{r_1}{\ell-s}}{ \binom{K}{r_1} \binom{K}{s}} \cdot \frac{\ell}{\ell - 1}, 
\end{IEEEeqnarray}
for any $r_1 \in \{1, \ldots, K\}$. 

\begin{Remark} \label{L1, s=1}
By letting  $s=1$ in  \eqref{L_1}, i.e.,  each output function is only computed by a single edge node, the communication load of Subsystem 1 is $\frac{1}{r_1} \cdot (1-\frac{r_1}{K})$, for any $r_1 \in \{1, \ldots, K\}$. 
\end{Remark}

\subsection{Subsystem 2}\label{SecSubsystem2}
%The  Map phase is similar to that in \cite{}. For 
We first discuss the case when $r_2$ is a positive integer, i.e., $r_2 \in \{1, \ldots, K\}$, and then apply the result to a special case when $r_2=0$. 

When $r_2 = K$, %each edge node maps all the files in Subsystem 2 and thus it can compute all of its output functions locally. Therefore, there is no need for the Master node to broadcast any message to the edge nodes and $L_2^*(K, s) = 0$ for all $s \in \{1, \ldots, K\}$. 
each file in Subsystem 2 is mapped by all the $K$ edge nodes and thus there is no need for the Master node to broadcast any message to the edge nodes, resulting in $L_2^*(K, s) = 0$ for all $s \in \{1, \ldots, K\}$. 

\subsubsection{Map phase}
The process in the Map phase is same as that in the CDC scheme, differing in that 1) here we have a dataset of ${N}_2 = (1 - \alpha)N$ input files rather than the whole  $N$ input files; 2) there exists a Master node mapping all ${N}_2$ input files; 3) each file is mapped by $r_2$ edge nodes and the Master node.

Specifically, we divide the $N_2$ files evenly into $\binom{K}{r_2} $ disjoint groups, each group contains $\eta_{1}$ files and corresponds to a subset $\mathcal{T}$ of size $r_2$: 
\begin{IEEEeqnarray}{rCl}
\{w_{N_1+1}, \ldots, w_{N_1+N_2}\} = \mathop{\bigcup}\limits_{\mathcal{T} \subset \{1, \ldots, K\}, \left| \mathcal{T} \right| = r_2} \mathcal{B}_{\mathcal{T}}, 
\end{IEEEeqnarray} 
where $\mathcal{B}_{\mathcal{T}}$ denotes the group of files mapped exclusively by the edge nodes in $\mathcal{T}$ and the Master node. Each edge node $k \in \{1, \ldots, K\}$  maps all files in $\mathcal{B}_{\mathcal{T}}$ if $k \in \mathcal{T}$, and generates local intermediate values $\{v_{q,n} : q \in \{1, \ldots, Q\}, w_n \in \mathcal{M}_k^2\}$ with $\left| \mathcal{M}_k^2 \right| = \binom{K-1}{r_2-1} \eta_{1} = \frac{r_2 N_2}{K}$.

%Given the allocation above, for each edge node $k \in \{1, \ldots, K\}$, if $k \in \xi$, it maps all the files in $\mathcal{R}_{\xi}$. Each edge node is in $\binom{K-1}{s-1}$ subsets of size $s$, thus it is responsible for computing $\left| \mathcal{W}_k \right| = \binom{K-1}{s-1} \lambda_{2} = \frac{sQ}{K}$ Map functions, for all $k \in \{1, \ldots, K\}$. 

%From the definition of $V_{\mathcal{S}_1}^{\mathcal{S} \setminus \mathcal{S}_1}$, we can see that the set of Reduce functions that requires the intermediate values in $V_{\mathcal{S}_1}^{\mathcal{S} \setminus \mathcal{S}_1}$ is computed exclusively by the edge nodes in $\mathcal{S} \setminus \mathcal{S}_1$ and a subset of $s  -  (\left| \mathcal{S} \setminus \mathcal{S}_1 \right|)$ edge nodes in $\mathcal{S}_1$, thus $V_{\mathcal{S}_1}^{\mathcal{S} \setminus \mathcal{S}_1}$ contains the intermediate values of $\binom{r_2}{s \ - \ (\left| \mathcal{S} \right| - r_2)}\eta_2 = \binom{r_2}{\left| \mathcal{S} \right| - s}\eta_2$ output functions. And since the $r_2$ edge nodes in $\mathcal{S}_1$ maps $\eta_1$ files, $V_{\mathcal{S}_1}^{\mathcal{S}\setminus\mathcal{S}_1}$ contains $\binom{r_2}{\left| \mathcal{S} \right| - s}\eta_2\eta_1$ intermediate values. }
\subsubsection{Shuffle phase}
The process in the Shuffle phase is similar to that in the CDC scheme, mainly differing in  that 1) in our scheme only the Master node  generates and multicasts message symbols, while in CDC scheme it's the  edge nodes to send message symbols; 2) in CDC scheme each intermediate value is split into segments, which is not necessary in our scheme; 3) to achieve a larger multicast gain, our scheme generates each message symbol  as a linear combination of $n_1\triangleq\binom{\left| \mathcal{S} \right|}{r_2}$ elements, while the value of $n_1$ in CDC is smaller, equaling to $\binom{\left| \mathcal{S}\right|-1}{r_2-1}$. 

Specifically, we divide $Q$ reduce functions evenly into $\binom{K}{s} $ disjoint groups, i.e., each group contains $\eta_{2}$ files and corresponds to a subset $\mathcal{P}$ of size $s$: 
\begin{IEEEeqnarray}{rCl} 
\{1, \ldots, Q\} = \mathop{\bigcup}\limits_{\mathcal{P} \subset \{1, \ldots, K\}, \left| \mathcal{P} \right| = s} \mathcal{D}_{\mathcal{P}}, 
\end{IEEEeqnarray} 
where $\mathcal{D}_{\mathcal{P}}$ denotes the group of reduce functions computed exclusively by the edge nodes in $\mathcal{P}$. Given the allocation above, for each edge node $k \in \{1, \ldots, K\}$, if $k \in \mathcal{P}$, it maps all the files in $\mathcal{D}_{\mathcal{P}}$. Each edge node is in $\binom{K-1}{s-1}$ subsets of size $s$, thus it is responsible for computing $\left| \mathcal{W}_k \right| = \binom{K-1}{s-1} \eta_{2} = \frac{sQ}{K}$ Reduce functions, for all $k \in \{1, \ldots, K\}$.

For a subset $\mathcal{S} \subseteq \{1, \ldots, K\}$, and a subset $\mathcal{S}_1 \subset \mathcal{S} : \left| \mathcal{S}_1 \right| = r_2$, {denote the set of intermediate values required by all edge nodes in $\mathcal{S} \setminus \mathcal{S}_1$  while exclusively known by both the Master node and  $r_2$ edge nodes in $\mathcal{S}_1$ as $V_{\mathcal{S}_1}^{\mathcal{S} \setminus \mathcal{S}_1}$, } i.e.,
\begin{align} \label{eqV}
V_{\mathcal{S}_1}^{\mathcal{S} \setminus \mathcal{S}_1} &\triangleq \{v_{q,n}: q \in \mathop{\cap}\limits_{k \in {\mathcal{S} \setminus \mathcal{S}_1}} \mathcal{W}_k, q \notin \mathop{\cup}\limits_{k \notin \mathcal{S} } \mathcal{W}_k, 
\notag
\\&
\phantom{=\;\;}
\quad \quad \quad  w_n \in  \mathop{\cap}\limits_{k \in  {\mathcal{S}_1}} \mathcal{M}_k^2, w_n \notin \mathop{\cup}\limits_{k \notin \mathcal{S}_1} \mathcal{M}_k^2 \}.
\end{align}

Following the same analysis in \cite{CDC}, we know that $V_{\mathcal{S}_1}^{\mathcal{S} \setminus \mathcal{S}_1}$     contains $\binom{r_2}{\left| \mathcal{S} \right| - s}\eta_2\eta_1$ intermediate values for   $\binom{r_2}{s \ - \ (\left| \mathcal{S} \right| - r_2)}\eta_2 = \binom{r_2}{\left| \mathcal{S} \right| - s}\eta_2$ output functions. 

Create a   symbol $ U_{\mathcal{S}_1}^{\mathcal{S}\setminus\mathcal{S}_1} \in \mathbb{F}_{2^{\binom{r_2}{\left| \mathcal{S} \right| - s}\eta_2\eta_1 T}}$ by concatenating all  intermediate values in $ V_{\mathcal{S}_1}^{\mathcal{S}\setminus\mathcal{S}_1}$. Given the set $\mathcal{S}$, there are in total $n_1 \triangleq \binom{\left| \mathcal{S} \right|}{r_2}$ subsets each with cardinality equaling to $r_2$. Index these $n_1$ set as $\mathcal{S}[1],\ldots,\mathcal{S}[n_1]$, and the corresponding message symbols are  $U_{\mathcal{S}[1]}^{\mathcal{S} \setminus \mathcal{S}[1]}, U_{\mathcal{S}[2]}^{\mathcal{S} \setminus \mathcal{S}[2]}, \ldots,  
\\ U_{\mathcal{S}[n_1]}^{\mathcal{S} \setminus \mathcal{S}[n_1]}$. After the Map phase, since the Master node knows all the intermediate values needed by the edge nodes, it multicasts $n_2 \triangleq \binom{\left| \mathcal{S} \right| - 1}{r_2}$ linear combinations of the $n_1$ message symbols to the edge nodes in $\mathcal{S}$, denoted by $X_0^{\mathcal{S}} \left[ 1 \right], X_0^{\mathcal{S}} \left[ 2 \right], \ldots, X_0^{\mathcal{S}} \left[ n_2 \right], $ for some coefficients $\alpha_1, \alpha_2, \ldots, \alpha_{n_1}$ distinct from one another and $\alpha_i \in \mathbb{F}_{2^{\binom{r_2}{\left| \mathcal{S} \right| - s}\eta_1 \eta_2 T}} $ for all $i = 1, \ldots, n_2$, i.e., 
\begin{align}
\begin{bmatrix}
X_0^{\mathcal{S}}[1]\\
X_0^{\mathcal{S}}[2]\\
\vdots\\
X_0^{\mathcal{S}}[n_2]
\end{bmatrix}
=
\begin{bmatrix}
1 & 1 & \cdots & 1 \\
\alpha_1 & \alpha_2 & \cdots & n_1 \\
\vdots & \vdots & \ddots & \vdots \\
\alpha_1^{n_2-1} & \alpha_2^{n_2-1} & \cdots & \alpha_{n_1}^{n_2-1} \\
\end{bmatrix}
\begin{bmatrix}
U_{\mathcal{S}[1]}^{\mathcal{S} \setminus \mathcal{S}[1]}\\
U_{\mathcal{S}[2]}^{\mathcal{S} \setminus \mathcal{S}[2]}\\
\vdots\\
U_{\mathcal{S}[n_1]}^{\mathcal{S} \setminus \mathcal{S}[n_1]}
\end{bmatrix}.
\end{align}
Since each edge node $k \in \mathcal{S}$ is in $\binom{\left| \mathcal{S} \right| - 1}{r_2 - 1}$ subsets of $\mathcal{S}$ with size $r_2$, it knows $\binom{\left| \mathcal{S} \right| - 1}{r_2 - 1}$ of the  symbols $U_{\mathcal{S}[1]}^{\mathcal{S} \setminus \mathcal{S}[1]}, U_{\mathcal{S}[2]}^{\mathcal{S} \setminus \mathcal{S}[2]}, \ldots, U_{\mathcal{S}[n_1]}^{\mathcal{S} \setminus \mathcal{S}[n_1]}$, and given that $\alpha_1, \alpha_2, \ldots, \alpha_{n_1}$ are distinct from each other, it can decode the rest of $n_1 - \binom{\left| \mathcal{S} \right| - 1}{r_2 - 1} = \binom{\left| \mathcal{S} \right|}{r_2} - \binom{\left| \mathcal{S} \right| - 1}{r_2 - 1} = \binom{\left| \mathcal{S} \right| - 1}{r_2} = n_2$ message symbols. 

Therefore, in the Shuffle phase, for each subset $\mathcal{S} \subseteq \{1, \ldots, K\}$ of size $\max\{r_2+1, s\} \leq \left| \mathcal{S} \right| \leq \min\{r_2+s, K\}$, the Master node multicasts $ n_2 = \binom{\left| \mathcal{S} \right| - 1}{r_2}$ message symbols to the edge nodes in $\mathcal{S}$, each message symbol containing $\binom{r_2}{\left| \mathcal{S} \right| - s}\eta_1 \eta_2 T $ bits. Therefore, the communication time of Subsystem 2 achieved by the above scheme is 
\begin{align}\label{L_2}
L_2(r_2, s) &= \mathop{\sum}\limits_{\ell = \max\{r_2+1, s\}}^{\min\{r_2+s, K\}} \frac{\binom{K}{\ell} \binom{\ell - 1}{r_2} \binom{r_2}{\ell - s} \eta_1 \eta_2 T}{QN_2T} \nonumber
%\\&= \mathop{\sum}\limits_{\ell = \max\{r_2+1, s\}}^{\min\{r_2+s, K\}} \frac{(\ell - 1)\binom{K}{\ell} \binom{\ell - 2}{r_2 - 1} \binom{r_2}{\ell - s}  {N_2}}{r_2 \binom{K}{r_2} \binom{K}{s} N_2} \nonumber
\\&= \mathop{\sum}\limits_{\ell = \max\{r_2+1, s\}}^{\min\{r_2+s, K\}} \frac{\binom{K}{\ell} \binom{\ell - 1}{r_2} \binom{r_2}{\ell - s}}{\binom{K}{r_2} \binom{K}{s}}.
\end{align}
\begin{Remark} \label{L2, s=1}
By letting $s=1$ in \eqref{L_2},   i.e.,  each output function is only computed by a single edge node,  the communication load of Subsystem 2 is $\frac{1}{r_2+1}  (1-\frac{r_2}{K})$, for any $r_2 \in \{0, \ldots, K\}$. 
\end{Remark}

When $r_2=0$, the $K$ edge nodes do not map any file in Subsystem 2. Therefore, in the Shuffle phase, the Master node simply unicasts needed intermediate values to  edge nodes. Since the $K$ edge nodes require $Q N_2 T$ intermediate values in total, this case yields a communication time of $L_2 (0,s)= 1$.%, which matches with \eqref{L_1} and \eqref{L_2}. 

\subsection{Overall Communication Time}

%From the definition of $L_1$, we learn that it denotes the total number of bits transmitted by the $K$ edge nodes, normalized by $Q N_1 T$ (i.e., the total number of bits of the intermediate values in Subsystem 1). Therefore, the total number of bits transmitted in Subsystem 1 in the Shuffle phase is $Q N_1 T \cdot L_1$. In the same way, given that the total number of bits of the intermediate values in Subsystem 2 is $Q N_2 T$, we can infer from the definition of $L_2$ that the total number of bits transmitted in Subsystem 2 in the Shuffle phase is $Q N_2 T \cdot L_2$. 

Therefore, for the sequential implementation, according to the definition of $L_\textnormal{S}$, the overall communication time of the entire system is 
\begin{align} \label{EqLS}
L_\textnormal{S}&= \frac{ Q N_1 T \cdot L_1 + Q N_2 T \cdot L_2}{QNT} \nonumber
\\&= \frac{ Q \alpha N T \cdot L_1 + Q (1-\alpha) N T \cdot L_2}{QNT} \nonumber
\\&= \alpha \cdot \mathop{\sum}\limits_{\ell = \max\{r_1+1,s\}}^{\min\{r_1+s,K\}} \frac{\binom{K}{\ell} \binom{\ell-1}{r_1} \binom{r_1}{\ell-s}}{\binom{K}{r_1} \binom{K}{s}} \cdot \frac{\ell}{\ell - 1} \nonumber \\ &\quad +  (1-\alpha) \cdot \mathop{\sum}\limits_{\ell = \max\{r_2+1, s\}}^{\min\{r_2+s, K\}} \frac{\binom{K}{\ell} \binom{\ell - 1}{r_2} \binom{r_2}{\ell - s}}{\binom{K}{r_2} \binom{K}{s}}, 
\end{align} 
for some $r_1,r_2\in\{0,\ldots,K\}$ and $\alpha r_1+(1-\alpha) r_2=r$.
For the parallel implementation, according to the definition of $L_\textnormal{P}$, the overall communication time of the entire system is 
\begin{align} 
L_\textnormal{P}\nonumber &= \max\{\frac{Q N_1 T \cdot L_1}{QNT}, \frac{Q N_2 T \cdot L_2}{QNT}\} \nonumber
\\&= \max\{\frac{Q \alpha N T \cdot L_1}{QNT}, \frac{Q (1-\alpha) N T \cdot L_2}{QNT}\} \nonumber
\\&= \max \{ \alpha \cdot \mathop{\sum}\limits_{\ell = \max\{r_1+1,s\}}^{\min\{r_1+s,K\}} \frac{\binom{K}{\ell} \binom{\ell-1}{r_1} \binom{r_1}{\ell-s}}{\binom{K}{r_1} \binom{K}{s}} \cdot \frac{\ell}{\ell - 1}, \nonumber 
\\& \; \quad \quad \; \; \; \; \; \; \; \; (1-\alpha) \cdot \mathop{\sum}\limits_{\ell = \max\{r_2+1, s\}}^{\min\{r_2+s, K\}} \frac{\binom{K}{\ell} \binom{\ell - 1}{r_2} \binom{r_2}{\ell - s}}{\binom{K}{r_2} \binom{K}{s}} \}, 
\end{align} 
for some $r_1,r_2\in\{0,\ldots,K\}$ and $\alpha r_1+(1-\alpha) r_2=r$.

\begin{Remark} 
When $s=1$ (i.e., each output function is computed by one edge node), the overall communication time for the sequential implementation is $ \alpha \cdot \frac{1}{r_1} \cdot (1-\frac{r_1}{K}) + (1-\alpha) \cdot \frac{1}{r_2+1} \cdot (1-\frac{r_2}{K})$, and the overall communication time for parallel implementation is $ \max\{\alpha \cdot \frac{1}{r_1} \cdot (1-\frac{r_1}{K}), (1-\alpha) \cdot \frac{1}{r_2+1} \cdot (1-\frac{r_2}{K})\}$. 
\end{Remark}

\section{Conclusions and Future Direction}\label{Sec_Conclusion}

In this paper, we considered the  Master-aided distributed computing systems where the Master node is capable of mapping files and sending symbols to help edge noes compute the output functions. We proposed a mixed coded distributed computing scheme that achieves the minimum communication time, for both the sequential implementation and the parallel implementation. Compared with the coded distributed computing scheme without the help of the  Master node,  the proposed scheme  can further reduce the communication time. For the sequential implementation, we deduce the approximate optimal file allocation between the two subsystems, which shows that  the Master node should  map as many files as possible in order to achieve smaller communication time.  When the Master node's storage and computing capabilities are sufficiently large, not necessary to store and map all files,  the proposed parallel scheme achieves at most 1/2 of the minimum communication time of system without the help of the Master node.  
 %For the parallel implementation, we prove that when the cache size of Master node is sufficiently large, the minimum communication time for the parallel scheme we propose is at most half of that of the scheme without a Mater node. Finally, we deduce an optimal information-theoretic lower bound on the overall communication load, for both the sequential case and the parallel case. 
 
%We prove that  this  scheme is optimal, i.e.,  achieves the minimum communication time. We establish an {\emph{optimal}} information-theoretic tradeoff between the overall communication time, computation load, and the Master node's storage capacity. It demonstrates that incorporating a Master node with storage and computing capabilities can further reduce the communication time, for both the sequential and parallel implementations.   Moreover, if the Master node's storage and computing capabilities are sufficiently large not necessary to store and map all files), then the proposed parallel scheme requires at most 1/2 of the minimum communication time of system without the help of the Master node.

%Our proposed scheme only contains 1 Master node. In the future, we will consider a model that includes multiple Master nodes, the edge nodes and the Master nodes transmit data through a wireless channel where the uplink time,  downlink time and interference alignment has to be taken into account. 

\appendices

\section{Choice of $(r_1, r_2, \alpha)$ for the Sequential Implementation with $s=1$}\label{Optimal_S}

We consider the the sequential implementation with   $s=1$ and derive the approximately optimal $\alpha$ and $M_1'$ omitting the constraints that $\alpha N, M_1', r_1, r_2 \in \mathbb{N}$, denoted by $\alpha^*$ and $M_1'^*$. 

%Since $r_1 = \frac{KM_1'}{\alpha N}$ and $r_2 = \frac{K(M_1-M_1')}{(1-\alpha) N}$,  the overall communication time $L_\textnormal{S}$ and $L_\textnormal{P}$ can be seen as functions of $\alpha$ and $M_1'$.  
%We lift the constraints that $\alpha N, M_1' \in \mathbb{N}$, and find $\alpha^*$ and $M_1'^*$ by taking the derivative of $L_\textnormal{S}$ to $M_1'$ and $\alpha$. 

Substituting $s = 1$ into \eqref{EqLS}, we have  
\begin{align}\label{EqLs1}
L_\textnormal{S} &%= \alpha \cdot L_1 + (1-\alpha) \cdot L_2 \nonumber
= \alpha \cdot \frac{1}{r_1} \cdot (1-\frac{r_1}{K})+ (1-\alpha) \cdot \frac{1}{r_2+1} \cdot (1-\frac{r_2}{K}) \nonumber
\\&= \alpha \cdot \frac{1}{r_1} + (1-\alpha) \cdot \frac{1+\frac{1}{K}}{r_2+1} - \frac{1}{K}. 
\end{align}
Since $r_1 = \frac{KM_1'}{\alpha N}$ and $r_2 = \frac{K(M_1-M_1')}{(1-\alpha) N}$, \eqref{EqLs1} can be written as 
\begin{align}\label{EqLs2} 
L_\textnormal{S} &= \alpha \cdot \frac{1}{\frac{KM_1’}{\alpha N}} + (1-\alpha) \cdot \frac{1+\frac{1}{K}}{\frac{K(M_1-M_1')}{(1-\alpha) N}+1} - \frac{1}{K} \nonumber
\\&= \frac{N \alpha^2}{KM_1'} + \frac{(1+\frac{1}{K}) N (1-\alpha)^2}{K(M_1 - M_1') + (1-\alpha)N} - \frac{1}{K}.
\end{align}
According to \eqref{EqLs2}, the first derivative of $L_\textnormal{S}$ respect to $M_1’$ is 
\begin{align}\label{EqLs3} 
\frac{\mathrm{d}L_\textnormal{S}}{\mathrm{d}M_1'} &= -\frac{N \alpha^2}{KM_1’^2} + \frac{(K+1) N (1-\alpha)^2}{[K(M_1 - M_1') + (1-\alpha)N]^2}. 
\end{align} 
Let $\frac{\mathrm{d}L_\textnormal{S}}{\mathrm{M_1'}} = 0$, then we have
\begin{align}\label{beta*} 
M_1'^* = \frac{[KM_1 + (1-\alpha)N] \cdot \alpha}{\sqrt{K} [\sqrt{K} \cdot \alpha + \sqrt{K+1} \cdot (1-\alpha)]}.
\end{align} 
By \eqref{EqLs3}, the second derivative of $L_\textnormal{S}$ respect to $M_1'$ is 
\begin{align} 
\frac{\mathrm{d}^2 L_\textnormal{S}}{\mathrm{d}M_1'^2} &= \frac{2N \alpha^2}{KM_1'^3} + \frac{2K(K+1) N (1-\alpha)^2}{[K(M_1 - M_1') + (1-\alpha)N]^3}.
\end{align} 
It is easy to see that $\frac{\mathrm{d}^2 L_\textnormal{S}}{\mathrm{d}M_1'^2} > 0$. Therefore, $M_1'^*$ is the only optimal solution of $M_1'$ to minimize  $L_\textnormal{S}$. 
 
Substituting \eqref {beta*} into \eqref {EqLs2}, the simplified relation between $L_\textnormal{S}$ and $\alpha$ is 
\begin{align}\label{EqLs4} 
L_\textnormal{S} =\frac{N}{K} \cdot \frac{[(\sqrt{K+1} - \sqrt{K}) \cdot \alpha - \sqrt{K+1}]^2}{KM_1+(1-\alpha)N} - \frac{1}{K}. 
\end{align} 
According to \eqref{EqLs4}, the first derivative of $L_\textnormal{S}$ respect to $\alpha$ is 
\begin{align}
\frac{\mathrm{d} L_\textnormal{S}}{\mathrm{d} \alpha} &= \frac{N}{K} \cdot \left\{ \frac{2(\sqrt{K+1}-\sqrt{K})[KM_1+(1-\alpha)N]}{[KM_1+(1-\alpha)N]^2}
\notag\right. 
\\&
\phantom{=\;\;}
\left. \; \cdot \ [(\sqrt{K+1}-\sqrt{K}) \alpha - \sqrt{K+1}] 
\notag\right. 
\\&
\phantom{=\;\;}
\left.
+ \frac{[(\sqrt{K+1}-\sqrt{K}) \alpha - \sqrt{K+1}]^2N}{[KM_1+(1-\alpha)N]^2} \right\}.
\end{align}

We let 
\begin{align}\label{EqLs5}
f&(\alpha) = 2(\sqrt{K+1}-\sqrt{K}) \cdot [(\sqrt{K+1}\!-\!\sqrt{K}) \alpha\! - \!\sqrt{K+1}] \nonumber
\\ &\cdot[KM_1+(1\!-\!\alpha)N] + [(\sqrt{K+1}\!-\!\sqrt{K}) \alpha \!-\! \sqrt{K+1}]^2N.
\end{align} 

Note that the quadratic coefficient in \eqref{EqLs5} is $-(\sqrt{K+1}-\sqrt{K})^2N$, which is a negative number. Therefore,  the second derivative of $L_\textnormal{S}$ respect to $\alpha$ is negative, i.e., $f''(\alpha) < 0$. 

When $\alpha = 0$, \eqref{EqLs5} becomes
\begin{align}
f(0) = [2(\sqrt{K+1}-\sqrt{K})(KM_1+N) &- \sqrt{K+1}\ N] \nonumber
\\& \cdot ( - \sqrt{K+1}).
\end{align}

Since 
\begin{align}
&\frac{2(\sqrt{K+1}-\sqrt{K})(KM_1+N)}{ \sqrt{K+1}\ N} \nonumber 
\\ \overset{(a)}=\ &2 \left(1 - \sqrt{1 - \frac{1}{K+1}}\right) \cdot (r+1) \nonumber
\\=\ &\frac{2(r+1)}{(K+1)\left(1+\sqrt{1-\frac{1}{K+1}}\right)} \nonumber 
\\ \overset{(b)} < \ &\frac{2(r+1)}{(K+1)\left(1+1-\frac{1}{K+1}\right)} \nonumber 
\\=\ &\frac{2(r+1)}{2K+1} \nonumber
\\ \approx \ &1, \nonumber
\end{align}
where (a) is due to $r=\frac{KM_1}{N}$, and (b) is due to $\sqrt{1-\frac{1}{K+1}} > 1-\frac{1}{K+1}$. As a result, we have $2(\sqrt{K+1}-\sqrt{K})(KM_1+N) - \sqrt{K+1}\ N <0$, and thus $f(0) \geq 0$. 

When $\alpha = 1$, \eqref{EqLs5} becomes
\begin{align}
f(1) = [2(\sqrt{K+1}-\sqrt{K})KM_1 - &\sqrt{K}\ N]  \cdot ( - \sqrt{K}). 
\end{align}

Since 
\begin{align}
&\frac{2(\sqrt{K+1}-\sqrt{K})\ KM_1}{\sqrt{K}\ N} \nonumber 
\\ \overset{(a)}=\ &2 \left(\sqrt{1+\frac{1}{K}} - 1\right) \cdot r \nonumber
\\ \overset{(b)} < \ &\frac{r}{K} \nonumber 
\\ \leq \ &1, \nonumber
\end{align}
where (a) is due to $r=\frac{KM_1}{N}$, and (b) is due to $\sqrt{1+\frac{1}{K}} \ - 1 < \frac{1}{2K}$. As a result, we have $2(\sqrt{K+1}-\sqrt{K})KM_1 - \sqrt{K}\ N <0$, and thus $f(1) \geq 0$. 

In summary, since $f(0) \geq 0$, $f(1) \geq 0$, and when $0 \leq \alpha \leq 1$, $f''(\alpha) < 0$, thus for any $0 \leq \alpha \leq 1$, we have $f(\alpha) \geq 0$. Therefore, when $0 \leq \alpha \leq 1$, $\frac{\mathrm{d} L_\textnormal{S}}{\mathrm{d} \alpha} \geq 0$, i.e., the smaller $\alpha$ is, the smaller $L_\textnormal{S}$ will be. Since the range of $\alpha$ is $(1 - \frac{M_0}{N}) \leq \alpha \leq 1$, the optimal solution of $\alpha$ is 
\begin{equation} \label{alpha*}
\alpha^* = 1 - \frac{M_0}{N}. 
\end{equation}

Therefore, the optimal solutions of $\alpha$ and $M_1'$ for this case are shown in \eqref{alpha*} and \eqref{beta*}, which result in the minimum communication time.

\section{Upper bounds of $L_\textnormal{P}^*$} \label{LPrange}
 For simplicity, we use notation $L_\textnormal{P}^*$ for  $L_\textnormal{P}^*(r,M_0)$.
 Rewrite the definitions $L_1(r_1) \triangleq \frac{1}{r_1} \cdot (1-\frac{r_1}{K})$ and $L_2(r_2) \triangleq \frac{1}{r_2+1} \cdot (1-\frac{r_2}{K})$. Obviously, we have $L_1(r) > L_2(r)$. We choose the computation load of the two subsystems to be equal, i.e., $r_1 = r_2 = r$, then
\begin{align} \label{LPupperbound}
L_\textnormal{P}^*&\leq \min_{\alpha\in[1-M_0/N,1]}\ \max\{\alpha L_1(r), (1-\alpha) L_2(r)\} \nonumber
\\&\leq \min_{\alpha\in[1-M_0/N,1]}\ \max\{\alpha L_1(r), (1-\alpha) L_1(r)\}. 
\end{align}

If $M_0 \geq \frac{N}{2}$, since $\alpha \geq (1 - \frac{M_0}{N})$, $\alpha$ can be $\frac{1}{2}$, then \eqref{LPupperbound} becomes 
\begin{align*} 
L_\textnormal{P}^*&\leq \min_{\alpha\in[1-M_0/N,1]}\ \max\{\alpha , (1-\alpha) \} L_1(r)= \frac{1}{2} L_1(r). 
\end{align*} 

If $M_0 < \frac{N}{2}$, then  $\alpha\geq (1 - \frac{M_0}{N}) >\frac{1}{2}$ and $1-\alpha\leq \frac{M_0}{N} <\frac{1}{2}$, thus  \eqref{LPupperbound} becomes 
\begin{align*} 
L_\textnormal{P}^*&\leq \min_{\alpha\in[1-M_0/N,1]}\! \max\{\alpha , (1\!-\!\alpha) \}L_1(r)= (1\!-\!\frac{M_0}{N}) L_1(r). 
\end{align*} 

Therefore, the upper bound of $L_\textnormal{P}^*$ that we derive is 
\begin{align}
L_\textnormal{P}^*\leq \max\{\frac{1}{2}, 1-\frac{M_0}{N}\} L_1(r). 
\end{align}

Again, by letting $r_1=r_2=r$,  we have 
\begin{align} \label{LPlowerbound}
L_\textnormal{P}^*&\leq \min_{\alpha}\ \max\{\alpha L_1(r), (1-\alpha) L_2(r)\}.
%\\&\geq   \min_{\alpha}\ \max\{\alpha L_2(r_1), (1-\alpha) L_2(r_2)\}. 
\end{align}

If $M_0 $ is sufficiently large such that there exists $\alpha^*$ satisfying 
$\alpha^*\geq 1-\frac{M_0}{N}$ and 
$\alpha^* L_1(r)=(1-\alpha^*) L_2(r)$, which is equivalent to $$ \quad \alpha^*=\frac{L_2(r)}{L_1(r)+L_2(r)}=\frac{r}{2r+1},\ M_0 \geq \frac{r+1}{2r+1}{N},$$  then we have%\eqref{LPlowerbound} becomes 
\begin{align} 
L_\textnormal{P}^*&\leq  \min_{\alpha}\ \max\{\alpha L_1(r), (1-\alpha) L_2(r)\} = \frac{L_1(r)L_2(r)}{L_1(r)+L_2(r)}. \nonumber
\end{align}

\section{Converse Proofs of Theorem \ref{LS(rs)} and \ref{LP(rs)}} \label{Converse} 

In this section, we prove the lower bound of $L_\textnormal{S}^*(r, s)$ and $L_\textnormal{P}^*(r, s)$ proposed in Theorem \ref{LS(rs)} and Theorem \ref{LP(rs)}. %We focus on the case $s \geq 1$ (i.e., each Reduce function is computed by $s$ different edge nodes). %When $s=1$, the following proof also holds for Theorem \ref{LS(r)} and Theorem \ref{LP(r)}. 

%Consider a distributed computing scheme with $N$ files and $Q$ output functions, and an arbitrary assignment of files (Map tasks) and output functions (Reduce tasks) among $K$ edge nodes and a Master node. Each edge node is responsible for both the Map tasks and the Reduce tasks, while the Master node is only responsible for the Map tasks.  
We first present a lemma 
proved in \cite{Optimality}, with only a minor change replacing $K$ with $K+1$.

\begin{Lemma}\label{Lemma1}
Consider a MapReduce-type task  with $N$ files and $Q$ reduce functions, and a given map and reduce design that runs in  a distributed computing system consisting of $K+1$ computing nodes. For any integers $c$, $d$,  let $a_{c,d}$ denote the number of intermediate values that are available at $c$ nodes, and required by (but not available at) $d$ nodes. The following lower bound on the communication load holds: \begin{equation} L \geq \frac{1}{QN}{\sum_{c=1}^{K+1} \sum_{d=1}^{K-c+1} a_{c,d} \frac{d}{c+d-1}}. \end{equation}

\end{Lemma}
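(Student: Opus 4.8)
The plan is to establish Lemma~\ref{Lemma1} by reducing it to the known converse bound of \cite{Optimality}, which is stated for a distributed computing system with a fixed number of computing nodes. The key observation is that our system, consisting of $K$ edge nodes \emph{plus} one Master node, behaves — from the point of view of intermediate value availability and demand — exactly like a distributed computing system with $K+1$ nodes, where the Master node plays the role of one additional node that happens to store no Reduce functions (i.e., demands nothing). First I would recall the statement and proof structure of the converse in \cite{Optimality}: one partitions all intermediate values by the pair $(c,d)$, where $c$ is the number of nodes at which the value is available after the Map phase and $d$ is the number of nodes that need it but do not have it, and then applies a cut-set / induction argument over subsets of nodes to lower bound the number of communicated bits contributed by each class $a_{c,d}$ by $a_{c,d}\,\frac{d}{c+d-1}$ (normalized).

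The main step is to verify that this argument goes through verbatim when one of the $K+1$ nodes is the Master node. The only features of a ``node'' used in the converse of \cite{Optimality} are: (i) it has a set of locally computed intermediate values; (ii) it has a set of demanded Reduce functions (possibly empty); (iii) it can transmit and receive symbols over the shared broadcast link. The Master node satisfies (i) and (iii), and satisfies (iii) with an empty demand set — the converse never requires demand sets to be nonempty, so setting $\mathcal{W}_0 = \varnothing$ is harmless. Hence, relabeling the Master node as ``node $K+1$'' and running the argument of \cite{Optimality} on the resulting $(K+1)$-node system yields precisely the claimed inequality, with the summation ranges $c \in \{1,\dots,K+1\}$ and $d \in \{1,\dots,K-c+1\}$ (the upper limit on $d$ reflecting that at most $K$ nodes — the edge nodes — can be demanders, so if a value is available at $c$ nodes it can be demanded by at most $K+1-c$ of the remaining, but since the Master demands nothing, at most $K-c+1$ edge nodes can be the demanders; one must double-check this index bookkeeping carefully).

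I expect the only real obstacle to be this index bookkeeping: making sure the upper summation limit $K-c+1$ on $d$ is correct given that the Master node is never a demander, and confirming that the normalization $\frac{1}{QN}$ (rather than $\frac{1}{Q N}$ with a different file count) is consistent with how $a_{c,d}$ is defined here. Everything else is a direct invocation of the cited result. I would therefore state the lemma, note that it is \cite{Optimality} applied with $K$ replaced by $K+1$ and with one designated node having no Reduce demands, and spell out the one-line justification that the Master node fits the node abstraction of \cite{Optimality}. The subsequent use of the lemma — plugging in the specific $a_{c,d}$ values induced by our two-subsystem scheme and optimizing over $(\alpha, r_1, r_2)$ to match the achievability in Theorems~\ref{LS(rs)} and~\ref{LP(rs)} — is where the remaining work of Appendix~\ref{Converse} lies, but the lemma itself requires only the reduction described above.
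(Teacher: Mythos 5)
Your proposal matches the paper's treatment exactly: the paper does not reprove this lemma but simply invokes the converse bound of \cite{Optimality} with $K$ replaced by $K+1$, treating the Master node as an additional computing node (one with an empty demand set), which is precisely your reduction. The index range $d \le K-c+1 = (K+1)-c$ is the direct substitution $K \to K+1$ in the original bound, so the bookkeeping you flag resolves as you expected.
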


Recall that $\mathcal{M}_k$, for $k\in[1,\ldots,K]$, and $\mathcal{M}_0$  represent the sets of mapped files by edge node  $k$, and the Master node, respectively.  %Given arbitrary  sets  $\{\mathcal{M}_k\}_{k=0}^K$,  we view the distributed computing system as a combination of two computing subsystem. The first subsystem  contains  input files $\mathcal{N}_1\triangleq \{w_1,\ldots,w_N\}\backslash \mathcal{M}_0$ and the second subsystem  contains  and the remaining input files $ \mathcal{M}_0$. 
Given a map design $\{\mathcal{M}_k\}_{k=0}^K$ and reduce design  $\{\mathcal{W}_k\}_{k=1}^K$, we define  
\begin{IEEEeqnarray}{rCl}
\mathcal{N}_1\triangleq \{w_1,\ldots,w_N\}\backslash \mathcal{M}_0,~\text{and}~\alpha \triangleq \frac{|\mathcal{N}_1|}{N},
\end{IEEEeqnarray} 
then we have $1-\alpha= \frac{|\mathcal{M}_0|}{N}$.

 Let $b_{j,1}$ denote the number of files that are mapped by $j$ edge nodes, but not mapped by the Master node, and let $b_{j,2}$ be the number of files that are mapped by $j$ edge nodes, and also mapped by the Master node. By the definitions of $\alpha$, $b_{j,1}$ and $b_{j,2}$, and since $r=\sum_{k=1}^K |\mathcal{M}_k|/N$, we have 
\begin{subequations}
\begin{IEEEeqnarray}{rCl}
&b_{j,1} \geq 0 \label{Converse_1_0}, 
\\& b_{j,2} \geq 0 \label{Converse_1_1}, 
\\&\alpha = \mathop{\sum}\limits_{j=0}^K \frac{b_{j,1}}{N} \label{Converse_1_3}, 
\\&1-\alpha = \mathop{\sum}\limits_{j=0}^K \frac{b_{j,2}}{N} \label{Converse_1_4}, 
\end{IEEEeqnarray}
%\end{subequations}
\newpage 
\begin{IEEEeqnarray}{rCl} \nonumber
\\& \mathop{\sum}\limits_{j=0}^K (b_{j,1} + b_{j,2}) = N \label{Converse_1_2},
\\& \mathop{\sum}\limits_{j=0}^K j(b_{j,1} + b_{j,2}) = rN \label{Converse_1_2_1}.  
\end{IEEEeqnarray}
\end{subequations}

We proceed to define
\begin{align} 
&r_1 \triangleq \mathop{\sum}\limits_{j=0}^K \frac{jb_{j,1}}{\alpha N} \label{Converse_1_5}, 
\\&r_2 \triangleq \mathop{\sum}\limits_{j=0}^K \frac{jb_{j,2}}{(1-\alpha) N} \label{Converse_1_6},
\end{align} 
and from the definitions of $r_1$, $r_2 $   and $r$, we have
\begin{IEEEeqnarray}{rCl}
\alpha r_1+(1-\alpha)r_2 =r.
\end{IEEEeqnarray}

According to Lemma \ref{Lemma1}, we have
\begin{IEEEeqnarray}{rCl}
L^* &\geq &\frac{1}{QN}{\sum_{c=1}^{K+1} \sum_{d=1}^{K-c+1} a_{c,d} \frac{d}{c+d-1}},\nonumber\\
&=& \frac{1}{QN}{\sum_{c=1}^K \sum_{d=1}^{K-c} n_{c,1} \frac{d}{c+d-1}}+\frac{1}{QN}{\sum_{c=1}^{K+1} \sum_{d=1}^{K-c+1} n_{c,2} \frac{d}{c+d-1}}, \nonumber\\
&=& \frac{1}{QN} \mathop{\sum}\limits_{j=0}^{K} \mathop{\sum}\limits_{d=\max\{1,s-j\}}^{\min\{s,K-j\}} \left(n_{j,1} \frac{d}{j\!+\!d\!-\!1} + n_{j,2} \frac{d}{j+d}\right)\nonumber\\
&=&\alpha L_1^* +(1-\alpha) L_2^*,
\end{IEEEeqnarray}
where 
\begin{IEEEeqnarray}{rCl}
n_{j,1}&\triangleq&\frac{Q}{\binom{K}{s}}b_{j,1} \binom{K-j}{d} \binom{j}{j+d-s}\\
n_{j,2}&\triangleq&\frac{Q}{\binom{K}{s}}b_{j,2} \binom{K-j}{d} \binom{j}{j+d-s}\\
L_1^*&\triangleq& \frac{1}{\alpha NQ} \mathop{\sum}\limits_{j=0}^{K} \mathop{\sum}\limits_{d=\max\{1,s-j\}}^{\min\{s,K-j\}} n_{j,1} \frac{d}{j\!+\!d\!-\!1}, \label{eqL1}\\
L_2^*&\triangleq& \frac{1}{(1-\alpha) NQ} \mathop{\sum}\limits_{j=0}^{K} \mathop{\sum}\limits_{d=\max\{1,s-j\}}^{\min\{s,K-j\}} n_{j,2} \frac{d}{j\!+\!d}\label{eqL2}.
\end{IEEEeqnarray}
Here $n_{j,1}$ is the number of intermediate values that are known by $j$ edge nodes, not known by the super edge node, and needed by (but not available at) $d$ edge nodes; $n_{j,2}$ is the number of intermediate values that are known by $j$ edge nodes and the Master node, and needed by (but not available at) $d$ edge nodes. 

From the definitions of $n_{j,1}$ and $b_{j,1}$, we  know that $L^*_1$ is only related to files in $\mathcal{N}_1$ and $K$ edge nodes, thus $L^*_1$
 can be viewed as the lower bound of a communication load of a distributed system consisting of $\alpha N$ input files and $K$ computing nodes
 are only related to files in $\mathcal{N}_1$. Similarly, $L^*_2$  is only related to files in $\mathcal{M}_0$ and is incurred by  the Master node and $K$ edge nodes, thus $L^*_2$ can be viewed as the lower bound of a communication load of a distributed system consisting $(1-\alpha)N$ input files and $K+1$ computing nodes. Since the Master node is not involved in $L_1^*$, if the communication load   $L_2^*$ is  incurred only by    the Master node's transmission, 
by Definition \ref{DefTime}, we obtain that %for the sequential implementation,
%then 
%  as  $L_{2,0}^*$ and by the edge nodes as $L_{2,\text{u}}^*$, then   by the Definition \ref{DefTime}, we have
\begin{IEEEeqnarray*}{rCl}
%L^*_\textnormal{P}&=&  \max \{L^*_1+L_{2,\text{u}}^*,L^*_{2,0}.\}\\L^*_\textnormal{P}&= &  \max \{L^*_1, L_{2,\text{u}}^*+L^*_{2,0}
L^*_\textnormal{S}&=&{\alpha L^*_1  +(1-\alpha)L^*_2 },\\
L^*_\textnormal{P}&= &  \max \{\alpha L^*_1, (1-\alpha)L^*_{2}\}.
\end{IEEEeqnarray*}

In the following two subsections, we derive the lower bound on $L^*_1$ and $L^*_2$, respectively, and showed that  $L^*_2$  can  be achieved by and only by the Master node's transmission.

\subsection{The lower bound of $L_1$} 
%In this subsection, we derive  the lower bound of $L_1(r_1, s)$. 

%\begin{equation}
%\begin{aligned}
%&\min_{\alpha, r_1,r_2} \ L_\textnormal{S}^*(r, s)
%\\ s.t. \ \ \ & r=\alpha r_1+(1-\alpha) r_2
%\\& (1 - \frac{M_0}{N}) \leq \alpha \leq 1 
%\\& 0 \leq r_1 \leq K\ (r_1 = 0\ \text{if and only if}~\alpha = 0)
%\\& 0 \leq r_2 \leq K 
%\end{aligned}
%\end{equation}

%For a certain subset of edge nodes with size $j$, there are $\binom{K-j}{d} \binom{j}{j+d-s}$ subsets of edge nodes with size $s$ that require the intermediate values known exclusively by the $j$ edge nodes. Each subset of $s$ edge nodes computes a unique batch of $\frac{Q}{\binom{K}{s}}$ reduce functions, hence the number of intermediate values that are known by $j$ edge nodes, not known by the super edge node, and needed by (but not available at) $d$ edge nodes is $\frac{Q}{\binom{K}{s}}b_{j,1} \binom{K-j}{d} \binom{j}{j+d-s}$. %In the same way, the number of intermediate values that are known by $j$ edge nodes and the super edge node, and needed by (but not available at) $d$ edge nodes is $\frac{Q}{\binom{K}{s}}b_{j,2} \binom{K-j}{d} \binom{j}{j+d-s}$. 

{
Letting $\ell=j+d$ and from   \eqref{eqL1},  we have 
%\begin{IEEEeqnarray}{rCl}\label{Converse_s_2}
%L_1^*   &=& \frac{1}{\binom{K}{s}(\alpha N)} \mathop{\sum}\limits_{j=0}^{K} \mathop{\sum}\limits_{\ell=\max\{j+1,s\}}^{\min\{j+s,K\}} b_{j,1} \binom{K-j}{\ell-j} \binom{j}{\ell\!-\!s} \frac{\ell\!-\!j}{\ell\!-\!1}.\nonumber\\
%\end{IEEEeqnarray}
%Then since the function $\binom{K-j}{\ell-j} \binom{j}{\ell-s} \frac{\ell-j}{\ell-1} $ is convex in $j$, $\mathop{\sum}\limits_{j=0}^K \frac{b_{j,1}}{\alpha N}=1$ by \eqref{Converse_1_3},  and by \eqref{Converse_1_3}, \eqref{Converse_1_4}, \eqref{Converse_1_5} and \eqref{Converse_1_6}, \eqref{Converse_s_2} becomes 

%Letting $\ell=j+d$ and from   \eqref{eqL2},  we have
\begin{align} \label{EqeqLoadUser}
L_1^* %& = \frac{1}{\binom{K}{s}(\alpha N)} \mathop{\sum}\limits_{j=0}^{K} \mathop{\sum}\limits_{\ell=\max\{j+1,s\}}^{\min\{j+s,K\}} b_{j,1} \binom{K\!-\!j}{\ell\!-\!j} \binom{j}{\ell\!-\!s} \frac{\ell\!-\!j}{\ell\!-\!1} \nonumber \\
&= \frac{1}{\binom{K}{s}} \mathop{\sum}\limits_{j=0}^{K} \mathop{\sum}\limits_{\ell=\max\{j+1,s\}}^{\min\{j+s,K\}} \frac{b_{j,1}}{\alpha N} \binom{K-j}{\ell-j} \binom{j}{\ell-s} \frac{\ell-j}{\ell-1} \nonumber
\\& \stackrel{(a)}{\geq} \frac{1}{\binom{K}{s}} \mathop{\sum}\limits_{\ell=\max\{\mathop{\sum}\limits_{j=0}^{K}{\frac{jb_{j,1}}{\alpha N}}+1,s\}}^{\min\{\mathop{\sum}\limits_{j=0}^{K}{\frac{jb_{j,1}}{\alpha N}}+s,K\}} \binom{K-\mathop{\sum}\limits_{j=0}^{K}{\frac{jb_{j,1}}{\alpha N}}}{\ell-\mathop{\sum}\limits_{j=0}^{K}{\frac{jb_{j,1}}{\alpha N}}}\nonumber
\\& \hspace{12ex} \cdot  \binom{\mathop{\sum}\limits_{j=0}^{K}{\frac{jb_{j,1}}{\alpha N}}}{\ell-s} \frac{\ell-\mathop{\sum}\limits_{j=0}^{K}{\frac{jb_{j,1}}{\alpha N}}}{\ell-1} \nonumber
\\& \stackrel{(b)}= \frac{1}{\binom{K}{s}} \mathop{\sum}\limits_{\ell=\max\{r_1+1,s\}}^{\min\{r_1+s,K\}} \binom{K-r_1}{\ell-r_1} \binom{r_1}{\ell-s} \frac{\ell-r_1}{\ell-1} \nonumber
%\\& \geq \mathop{\sum}\limits_{\ell=\max\{r_1+1,s\}}^{\min\{r_1+s,K\}} \frac{\binom{K-r_1}{\ell-r_1} \binom{r_1}{\ell-s}}{\binom{K}{s}} \cdot \frac{\ell-r_1}{\ell-1} \nonumber
\\& = \mathop{\sum}\limits_{\ell=\max\{r_1+1,s\}}^{\min\{r_1+s,K\}} \frac{\binom{K}{\ell} \binom{\ell-1}{r_1} \binom{r_1}{\ell-s}}{\binom{K}{r_1} \binom{K}{s}} \cdot \frac{\ell}{\ell\! - \!1},
\end{align} 
where ($a$) holds by Jensen's inequality {and because the function $\binom{K-j}{\ell-j} \binom{j}{\ell-s} \frac{\ell-j}{\ell-1} $ is convex in $j$, and $\mathop{\sum}\limits_{j=0}^K \frac{b_{j,1}}{\alpha N}=1$ by \eqref{Converse_1_3}; equality ($b$) holds by \eqref{Converse_1_5}.}
}
For general $0 \leq r_1 \leq K$, we find the line $p_1+q_1j$ as a function of $0 \leq r_1 \leq K$ connecting the two points $(\lfloor r_1 \rfloor, L_1(\lfloor r_1 \rfloor, s))$ and $(\lceil r_1 \rceil, L_1(\lceil r_1 \rceil, s))$, i.e., 
\begin{IEEEeqnarray}{rCl}
&p_1+q_1j|_{j=\lfloor r_1 \rfloor} = L_1(\lfloor r_1 \rfloor, s), 
\\& p_1+q_1j|_{j=\lceil r_1 \rceil} = L_1(\lceil r_1 \rceil, s).
\end{IEEEeqnarray}

Then by the convexity of the function $L_1(j, s)$  in $j$, we have for integer-valued $j =0, \ldots, K$, 
\[L_1(j, s) \geq p_1+q_1j. \]
Therefore, the minimum communication load  $L_1^*$ is lower bounded by   the lower convex envelope of the points $\{(r_1, L_1({r_1}, s)): r_1 \in \{0, \ldots, K\} \}$. 

\subsection{The lower bound of $L_2$} 
%The number of intermediate values that are known by $j$ edge nodes and the Master node, and needed by (but not available at) $d$ edge nodes is $\frac{Q}{\binom{K}{s}}b_{j,2} \binom{K-j}{d} \binom{j}{j+d-s}$. 

%Therefore, from Lemma 1, the communication load is lower bounded by the following inequality: 
Letting $\ell=j+d$ and from \eqref{eqL2}, we have
%\begin{align}
%&L_2^* = \mathop{\sum}\limits_{j=0}^{K} \mathop{\sum}\limits_{d=\max\{1,s-j\}}^{\min\{s,K-j\}} \frac{b_{j,2}}{\binom{K}{s}N} \binom{K-j}{d} \binom{j}{j+d-s} \frac{d}{j+d}. \label{Converse_s_3}
%\end{align}
%We let $\ell=j+d$, then \eqref{Converse_s_3} becomes 
%\begin{align}
%\label{Converse_s_4}
%&L_2^* \geq  \mathop{\sum}\limits_{j=0}^{K} \mathop{\sum}\limits_{\ell=\max\{j+1,s\}}^{\min\{j+s,K\}} \frac{b_{j,2}}{\binom{K}{s}(1\!-\!\alpha) N} \binom{K\!-\!j}{\ell\!-\!j} \binom{j}{\ell-s} \frac{\ell-j}{\ell}. 
%\end{align}
%Then since the $\binom{K\!-\!j}{\ell\!-\!j} \binom{j}{\ell-s} \frac{\ell-j}{\ell}$ is convex in $j$, $\mathop{\sum}\limits_{j=0}^K \frac{b_{j,2}}{(1-\alpha) N}=1$, and by \eqref{Converse_1_3}, \eqref{Converse_1_4}, \eqref{Converse_1_5}, \eqref{Converse_1_6}, and \eqref{Converse_s_4}, we have 
\begin{IEEEeqnarray}{rCl}\label{eqLoadMaster}
L_2^*&=&  \mathop{\sum}\limits_{j=0}^{K} \mathop{\sum}\limits_{\ell=\max\{j+1,s\}}^{\min\{j+s,K\}} \frac{b_{j,2}}{\binom{K}{s}(1-\alpha) N} \binom{K-j}{\ell-j} \binom{j}{\ell\!-\!s} \frac{\ell\!-\!j}{\ell} \nonumber 
\\\nonumber%&= \ \mathop{\sum}\limits_{j=0}^{K} \mathop{\sum}\limits_{\ell=\max\{j+1,s\}}^{\min\{j+s,K\}} \frac{b_{j,2}}{\binom{K}{s}(1-\alpha) N} \binom{K-j}{\ell-j} \binom{j}{\ell-s} \frac{\ell-j}{\ell} \nonumber
\\& \stackrel{(a)}{\geq}& \frac{1}{\binom{K}{s}} \mathop{\sum}\limits_{\ell=\max\{\mathop{\sum}\limits_{j=0}^{K}{\frac{jb_{j,2}}{(1-\alpha) N}}+1,s\}}^{\min\{\mathop{\sum}\limits_{j=0}^{K}{\frac{jb_{j,2}}{(1-\alpha) N}}+s,K\}} \binom{K-\mathop{\sum}\limits_{j=0}^{K}{\frac{jb_{j,2}}{(1-\alpha) N}}}{\ell-\mathop{\sum}\limits_{j=0}^{K}{\frac{jb_{j,2}}{(1-\alpha) N}}} \nonumber
\\&& \hspace{15ex} \ \cdot \binom{\mathop{\sum}\limits_{j=0}^{K}{\frac{jb_{j,2}}{(1-\alpha) N}}}{\ell-s} \frac{\ell-\mathop{\sum}\limits_{j=0}^{K}{\frac{jb_{j,2}}{(1-\alpha) N}}}{\ell} \nonumber
\\& \stackrel{(b)}=& \frac{1}{\binom{K}{s}} \mathop{\sum}\limits_{\ell=\max\{r_2+1,s\}}^{\min\{r_2+s,K\}} \binom{K-r_2}{\ell-r_2} \binom{r_2}{\ell-s} \frac{\ell-r_2}{\ell} \nonumber
%\\& \geq& \mathop{\sum}\limits_{\ell=\max\{r_2+1,s\}}^{\min\{r_2+s,K\}} \frac{\binom{K-r_2}{\ell-r_2} \binom{r_2}{\ell-s}}{\binom{K}{s}} \cdot \frac{\ell-r_2}{\ell} \nonumber
\\& =& \mathop{\sum}\limits_{\ell=\max\{r_2+1,s\}}^{\min\{r_2+s,K\}} \frac{\binom{K}{\ell} \binom{\ell-1}{r_2} \binom{r_2}{\ell-s}}{\binom{K}{r_2} \binom{K}{s}},
\end{IEEEeqnarray}
where ($a$) holds by Jensen's inequality {and because  the function $\binom{K-j}{\ell-j} \binom{j}{\ell-s} \frac{\ell-j}{\ell} $ is convex in $j$, and  $\mathop{\sum}\limits_{j=0}^K \frac{b_{j,2}}{(1-\alpha) N}=1$ by \eqref{Converse_1_4};  equality ($b$) holds by \eqref{Converse_1_6}.}

For general $0 \leq r_2 \leq K$, we find the line $p_2+q_2j$ as a function of $0 \leq r_2 \leq K$ connecting the two points $(\lfloor r_2 \rfloor, L_2(\lfloor r_2 \rfloor, s))$ and $(\lceil r_2 \rceil, L_2(\lceil r_2 \rceil, s))$, i.e., 
\begin{align}
&p_2+q_2j|_{j=\lfloor r_2 \rfloor} = L_2(\lfloor r_2 \rfloor, s), 
\\& p_2+q_2j|_{j=\lceil r_2 \rceil} = L_2(\lceil r_2 \rceil, s).
\end{align} 
Then by the convexity of the function $L_2(j, s)$ in $j$, we have for integer-valued $j =0, \ldots, K$, 
\begin{align}
& L_2(j, s) \geq p_2+q_2j. 
\end{align}

Therefore, the minimum communication load $L_2^*$ is lower bounded by  the lower convex envelope of the points $\{(r_2, L_2({r_2}, s)): r_2 \in \{0, \ldots, K\} \}$.

\subsection{Lower Bounds of the Overall Communication Time}

Next, for both the sequential implementation and the parallel implementation, we respectively deduce the lower bound of the overall communication time from the above results. 

Since the overall computation load $r$ satisfies $r = \alpha r_1 + (1 - \alpha) r_2$, for a certain $(1 - \frac{M_0}{N}) \leq \alpha' \leq 1$ and any $0 \leq r' \leq K$, there exist $0 \leq r_1' \leq K$ and  $0 \leq r_2' \leq K$ that satisfy $r' = \alpha r_1' + (1 - \alpha) r_2'$. In other words, any computation load $r$ in $0 \leq r \leq K$ is achievable given appropriate $r_1$ and $r_2$.

In Section \ref{SecSubsystem2}, we showed that the lower bound $L_2^*$ in \eqref{eqLoadMaster} can be achieved by a scheme where the Master node  multicasts some message symbols to the edge nodes. Comparing with the lower bound of communication load in \eqref{EqeqLoadUser} (only edge nodes can send message symbols) and in \eqref{eqLoadMaster} (the Master node can send message symbols) for $r_1=r_2=r$, and since $\frac{\ell}{\ell - 1}>1$ for all $l>1$, we know that sharing any amount of transmission of message symbol from the Master node to the edge nodes will cause strictly larger communication load, thus $L_2^*$ must be incurred solely by the Master node, otherwise the resulting communication load must be strictly larger than $L_2^*$. Therefore, for the sequential and parallel implementations, by Definition \ref{DefTime},  
 the minimum communication time $L_\textnormal{S}^*(r, s)$ and $L_\textnormal{P}^*(r,s)$ is  
\begin{align}
L_\textnormal{S}^*(r, s) &= \alpha \cdot L_1^*(r_1, s) + (1-\alpha) \cdot L_2^*(r_2, s), \\
L_\textnormal{P}^*(r, s) &= \max\{\alpha \cdot  L_1^*(r_1, s), (1-\alpha) \cdot L_2^*(r_2, s)\}, 
\end{align}
satisfying
\begin{equation}
\begin{aligned}
&r = \alpha r_1 + (1 - \alpha) r_2, 
\\& (1 - \frac{M_0}{N}) \leq \alpha \leq 1,
\\& 0 \leq r_1 \leq K\ (r_1 = 0\ \text{if\ and\ only\ if}\ \alpha = 0),
\\& 0 \leq r_2 \leq K.
\end{aligned}
\end{equation}
%which completes the converse proof. for the parallel implementation.

\end{document}